\newtheorem{theorem}{Theorem}
\newtheorem{lemma}{Lemma}
\newtheorem{proposition}{Proposition}
\newtheorem{corollary}{Corollary}
\newtheorem{assumption}{Assumption}
\newtheorem{remark}{Remark}
\newcommand{\sam}[1]{\textcolor{black}{#1}}
\newcommand{\del}[1]{{\textcolor{red}
{\begin{tt}To modify: \end{tt}#1}}}
\def\R{\mathbb{R}}
\def\N{\mathbb{N}}
\def\G{\mathcal{G}}
\def\E{\mathcal{E}}
\def\C{\mathcal{C}}
\def\1{{\bf 1}}
\def\0{{\bf 0}}
\renewcommand{\underbar}[1]{{\underaccent{\bar}{#1}}}
\newcommand{\unm }{\{1,\ldots, m\}}
\newcommand{\unn}{\{1,\ldots, n\}}
\newcommand{\uc}{\underbar{c}}
\newcommand{\ua}{\underbar{a}}
\newcommand{\NN}{{\mathcal{N}}}
\newcommand{\MM}{{\mathcal{M}}}
\newcommand{\OO}{\mathcal{O}}
\newcommand{\GG}{\mathcal{G}}
\newcommand{\un}{{\bf{1}}}
\newcommand{\argmax}{\textrm{arg}\max}
\newcommand{\eps}{\varepsilon}
\newcommand{\ie}{{\it i.e., }}
\newcommand{\eg}{{\it e.g. }}
\newcommand{\w}[2]{w_{#1 \,\leftarrow\, #2}}
\newcommand{\ssum}{\displaystyle\sum}
\newcommand{\mmax}{\displaystyle\max}
\newcommand{\ubar}[1]{\underaccent{\bar}{#1}}
\begin{document}

\title{\LARGE \bf
Time scale modeling for consensus in sparse directed networks with time-varying topologies \thanks{The work of S. Martin and I.-C. Mor\u{a}rescu was supported by project PEPS MoMIS MADRES and PEPS INS2I CONAS funded by the CNRS and project \textit{Computation Aware Control Systems (COMPACS)}, ANR-13-BS03-004 funded by ANR. The work of D. Ne\v{s}i\'c was supported by the Australian Research Council under the Discovery Project DP1094326.}
}

\author{Samuel Martin$^*$, Irinel-Constantin Mor\u{a}rescu$^*$
\thanks{$^*$  Universit\'e de
Lorraine, CRAN, UMR 7039, and CNRS, CRAN, UMR 7039,  2, Avenue de la For\^et de Haye, 54500, Vand\oe uvre, France.
(\textit{samuel.martin},\textit{constantin.morarescu})@\textit{univ-lorraine.fr}.
}, Dragan Ne\v{s}i\'c$^{\#}$\thanks{$^{\#}$  Department of Electrical and Electronic Engineering, The University of Melbourne, Parkville, 3010 VIC, Australia. \textit{ dnesic@unimelb.edu.au}}
}

\maketitle
\thispagestyle{empty}
\pagestyle{empty}

\begin{abstract}
The paper considers the consensus problem in large networks represented by time-varying directed graphs. A practical way of dealing with large-scale networks is to reduce their dimension by collapsing the states of nodes belonging to densely and intensively connected clusters into aggregate variables. It will be shown that under suitable conditions, the states of the agents in each cluster  converge fast toward a local agreement. Local agreements correspond to aggregate variables which slowly converge to consensus. Existing results concerning the time-scale separation in large networks focus on fixed and undirected graphs. The aim of this work is to extend these results to the more general case of time-varying directed topologies. It is noteworthy that in the fixed and undirected graph case the average of the states in each cluster is time-invariant when neglecting the interactions between clusters. Therefore, they are good candidates for the aggregate variables. This is no longer possible here. Instead, we find suitable time-varying weights to compute the aggregate variables as time-invariant weighted averages of the states in each cluster. This allows to deal with the more challenging time-varying directed graph case. We end up with a singularly perturbed system which is analyzed by using the tools of two time-scales averaging which seem appropriate to this system.
\end{abstract}

\section{Introduction}

Large dynamical systems of interacting units such as power grids, transportation networks or the brain can be modeled as multi-agent systems. Unfortunately, these models consist of very large number of coupled differential equations whose analysis becomes intractable. Therefore, it is crucial to reduce the complexity of the system by proposing appropriate reduced-order models that still capture the  asymptotic behavior of the network. A natural approach is to merge a number of agents into a single node whose state approximates the states of the agents that generated it. This idea has been successfully applied for networks represented as fixed undirected graphs  \cite{Hanski1998,Gfeller2008,Arcak2007,ChowKokotovic,Bulo2012}. An extension of this idea to networks represented as fixed directed graphs was proposed in \cite{MP2012}.

The goal of this paper is to provide a methodology to generate appropriate reduced systems for multi-agent systems with a large number of agents interacting through directed time-varying networks. A fundamental assumption is that the interaction network consist of sparsely connected clusters of densely connected agents \cite{Arcak2007,ChowKokotovic,Morarescu-et-al-TAC2016}. An interesting feature of these networks is that the agents first converge fast toward a local agreement inside the clusters and then they slowly converge to a global consensus. This time scale separation has been emphasized and used for the analysis of power grids with fixed and undirected interconnections~\cite{Arcak2007,ChowKokotovic,Bulo2012}. The same feature has been used in \cite{MG10} for clusters detection in large scale networks. However, up to now, only symmetric and time-invariant communication networks have been dealt with.

In the case of fixed and undirected graph the average of the states in each cluster is time invariant when neglecting the interactions between clusters. Therefore, it is natural to use it as an aggregate state. When the interactions are directed and time varying we have to define appropriate aggregate variables. To do so, we introduce time-varying weights defining a time-invariant weighted average (using reasoning similar to \cite{MartinHendrickx2015}) as aggregate state of a cluster isolated from the other clusters. As a result we exhibit a time-scale separation in the network justifying the approximation of the overall network dynamics with the dynamics of the slow aggregate variables. It is important to note that unlike the fixed undirected interconnections graph~\cite{Arcak2007,ChowKokotovic,Bulo2012}, the resulting singularly perturbed system in this paper is not in standard form and we cannot use the standard tools \cite{KhalilBook,KokotovicBook}. Instead it is more natural to employ averaging theory for two time-scales systems~\cite{Balachandra,NesicAveraging2001,Teel}. \sam{Note that these results are not restricted to periodic systems.}
The contribution of the paper can be summarized as follows:
\begin{itemize}
\item we provide the assumptions guaranteeing that {\bf the convergence is faster inside clusters than between clusters};
\item under these assumptions, we show that {\bf time-scale separation occurs and can be used for analysis of directed and time varying interconnections;}

\item {\bf we use averaging theory for two time-scales systems} which seems to be appropriate, under the stated conditions, for the analysis of networks structured in sparsely connected clusters of densely connected agents.
\end{itemize}
The rest of the paper is organized as follows. Section \ref{Pb} introduces the model under study, the assumptions allowing to rewrite the model in singular perturbation form and presents the main contributions related to the singular perturbation modeling and analysis. In Section \ref{sec:model-reduction} we give some technical results concerning the two time-scales modeling. The singular perturbation analysis is provided in Section \ref{sec:main} and an illustrative numerical examples is presented in Section \ref{sec:numerical-illustration}.
\\
\textbf{Notation.}
The following standard notation will be used throughout the paper. The set of non-negative integers, real and non-negative real numbers are denoted by $\N,\ \R$ and $\R_+$, respectively. For a matrix $A$ we denote by $\|A\|_\infty$ its infinity norm. The transpose of a matrix $A$ is denoted by $A^\top$. By $I_k$ we denote the $k\times k$ identity matrix. $\1_k$ and $\0_k$ are the column vectors of size $k$ having all the components equal 1 and 0, respectively. A non trivial subset $S$ of a set $C$, denoted as $S \sqsubset C$, is a non-empty set with $S\subsetneq C$.
A \emph{directed path of length $p$} in a given directed graph $\G=(\NN,\E)$ is a union of directed edges $\bigcup_{k=1}^{p}(i_k,j_k)$ such that $i_{k+1}=j_k,\,\forall k\in\{1,\ldots,p-1\}$.
The node $j$ is  \emph{connected} to node $i$ in a directed graph $\G=(\NN,\E)$ if there exists at least a directed path in $\G$ from $i$ to $j$ (i.e. $i_1=i$ and $j_p=j$). 
\section{Preliminaries}\label{Pb}
\subsection{Model statement}

Let $\NN\triangleq\unn$ be a set of $n$ agents. By abuse of notation we denote  both the agent and its index by the same symbol $i \in \NN$. Each agent is characterized by a scalar state $x_i\in \R, \forall i\in\NN$ that evolves according to the following consensus system
\begin{equation}\label{eq:def_sys_derivative}
\dot{x}_i (t)= \sum_{j=1}^n a_{ij}(t) (x_j(t) - x_i(t)),\  \forall i\in\NN,
\end{equation}
where $a_{ij}(t)\geq 0$ are twice continuously differentiable and uniformly upper-bounded functions of time representing the \textit{interaction weights}. Let $x(t)=(x_1(t),\ldots,x_n(t))^\top\in\R^n$ be the overall state of the network collecting the states of all the agents. 
Under these conditions, there exists a unique differentiable function of time $x:\R^+ \rightarrow \R^n$ whose components satisfy equation~(\ref{eq:def_sys_derivative}) for all $t\in \R^+$~\cite{Filippov1988}, and it is bounded~\cite{HendrixTsitsiklis_CutBalanced_IEEETAC2013}.
We call it the {\it trajectory} of the overall system. We say the trajectory asymptotically reaches a consensus when there exists a common agreement value $\alpha\in\R$ such that
$$
 \lim_{t\rightarrow +\infty} x_i(t) = \alpha, \ \forall i \in \NN.
$$
In the sequel, agents are assumed to be partitioned in $m$ non-empty clusters: $\C_1, \C_2, \ldots, \C_m\subset\NN$, that are assumed to be given or can be easily identified. For instance clusters may correspond to groups of agents which are spatially close to each other while different clusters are distant to each other. Let us introduce the following supplementary notation: $\MM \triangleq\unm$ and $n_i$ denotes the cardinality of cluster $\C_i$ and $\underbar{n} = \min_{i \in \MM } n_i$. Without loss of generality, we permute the agents' labels according to the partition so that when $j\in \C_i$ and $j' \in \C_{i+1}$, $j<j'$.

Denote $A(t) = (a_{ij}(t))$ the adjacency matrix of communication weights at time $t$, $D(t)=diag(d_{ii}(t))$ with $d_{ii}(t)=\sum_{j\in\NN}a_{ij}(t)$, and $L(t)=D(t)-A(t)$ its associated Laplacian matrix. We denote as $A^I(t) = diag(A^1(t),\ldots,A^m(t))$ the block-diagonal intra-cluster adjacency matrix where for $k \in \MM $, $A^k(t) = (a_{ij}(t))_{(i,j) \in \C_k^2}$. We also denote by $L^k(t)$ the Laplacian matrix associated to $A^k(t)$ and $L^I(t) = diag(L^1(t),\ldots,L^m(t))$ the block-diagonal Laplacian matrix associated to $A^I(t)$. The inter-cluster communications are then described by $A^E(t) = A(t) - A^I(t)$ and its associated Laplacian matrix $L^E(t) = L(t) - L^I(t)$. The inter-cluster adjacency matrix has zero elements $A^E_{ij}(t)$ if $i$ and $j$ belong to the same cluster.
\subsection{Framework assumptions}
In the following let us introduce some notation and the main hypotheses of this work.
For two subsets of nodes $A,B \subset \NN$, the sum of communication weights from $B$ to $A$ is denoted as 
$$
\w{A}{B}(t) = \sum_{i \in A, j \in B} a_{ij}(t).
$$
Let us also recall some notation in \cite{Arcak2007,ChowKokotovic,Bulo2012} by adapting them to cope with non constant internal communications:
\begin{equation}\label{eq:delta-def}
\begin{split}
&c^I(t)=\min_{k\in\MM, S \sqsubset \C_k}\w{S}{(\C_k \setminus S)}(t),\\
&\gamma^E(t)=\max_{k\in\MM}\w{\C_k}{(\NN \setminus \C_k)}(t), \\ &\delta(t)=\frac{\gamma^E(t)}{c^I(t)}. 
\end{split}\end{equation}
\begin{assumption}[Intra-cluster communication]\label{ass:intra-reciprocity}
There exist constants $\uc\geq0, K_I\geq 1$ such that $c^I(t) \ge \uc, \forall t\geq 0$ and for any cluster $k \in \MM $ and for all non trivial subsets $S \sqsubset \C_k$,
\vspace{-0.2cm}
$$
\w{S}{(\C_k \setminus S)}(t) \le K_I \cdot \w{(\C_k \setminus S)}{S}(t),\ \forall t\ge 0.
$$
\end{assumption}
\vspace{0.3cm}
As proven in the sequel, Assumption \ref{ass:intra-reciprocity} ensures exponential consensus inside each cluster seen as an independent network (see Lemma \ref{l:convergence-exponential-rate}). This will justify the merging of all the nodes belonging to a cluster into one aggregate node.

\begin{assumption}[Inter-cluster communication]\label{ass:inter-reciprocity}
There exist constants $K_E\ge 1$ and $\ua\in (0,1)$ such that for all non trivial subset $S \sqsubset \MM $, for all $t\ge 0$, the two following equations hold :
\begin{equation}\label{eq:persistence-external-weights}
\sum_{k \in S} \w{\C_k}{(\NN \setminus \C_k)}(t) \le K_E \cdot \sum_{k \in S} \w{(\NN \setminus \C_k)}{\C_k}(t),
\end{equation}
\vspace{-0.2cm}
\begin{equation}\label{eq:reciprocity-external-weights}
\sum_{k\in S}\sum_{h\in\MM\setminus S}\w{\C_k}{\C_h}(t) \geq c^I(t) \eps \ua.
\end{equation}
\end{assumption}
\vspace{0.3cm}
Assumption~\ref{ass:inter-reciprocity} ensures consensus of the aggregate state associated with the clusters, \sam{where equation~\eqref{eq:persistence-external-weights} ensures reciprocity of the external interactions while equation~\eqref{eq:reciprocity-external-weights} guarantees their persistence when Assumption~\ref{ass:intra-reciprocity} is satisfied (see \eg\cite{SamAntoine_Persistent_SICON2013} for definitions).} Constant $\ua$ can be chosen arbitrarily small.
It is necessary that $K_I \ge 1$ and $K_E \ge 1$ and the equality corresponds to symmetric communication.
The next assumption ensures that the total communication weight which a cluster $\C_k$ receives cannot exceed a proportion of the weight received by any non-trivial subset of $\C_k$ from the rest of $\C_k$.

\begin{assumption}\label{ass:cI-bound-eps-uniform-bound-on-delta}
The ratio between internal and external interaction weights (defined in~\eqref{eq:delta-def}) is bounded by a constant bound $\eps>0$ : for all $t\ge 0$,
\vspace{-0.2cm}$$ \delta(t) \le \eps.$$
\end{assumption}

The purpose of Assumption~\ref{ass:cI-bound-eps-uniform-bound-on-delta} is to ensure that the partition in clusters corresponds to the distribution of communication weights. It prevents cases where two subsets of a cluster are more connected to the outside than with each other. As it will be detailed in section~\ref{sec:model-reduction}, Assumption~\ref{ass:cI-bound-eps-uniform-bound-on-delta} is necessary in order to exhibit a two time-scale separation of the system.

\begin{remark}\label{ass:rho}
Assumption~\ref{ass:cI-bound-eps-uniform-bound-on-delta} guarantees that for each cluster $k \in \MM $ and for all non trivial subsets $S \sqsubset \C_k$,
\vspace{-0.2cm}
\[
\w{\C_k}{(\NN \setminus \C_k)}(t) \le \eps \cdot \w{S}{(\C_k \setminus S)}(t),\ \forall t\ge 0.
\]
\end{remark}
\vspace{0.2cm}

\begin{remark}\label{Rem-ass3} Note that Assumption~\ref{ass:cI-bound-eps-uniform-bound-on-delta} does not follow from Assumptions~\ref{ass:intra-reciprocity} and~\ref{ass:inter-reciprocity} hold. For a counter-example see~\cite{MartinMorarescuNesic_technical_com}. 
\end{remark}

\begin{remark}
For instances of classes of communication weights satisfying Assumptions~\ref{ass:intra-reciprocity} and~\ref{ass:inter-reciprocity}, see~\cite{HendrixTsitsiklis_CutBalanced_IEEETAC2013}. Section~\ref{sec:numerical-illustration} also provides an instance where Assumptions~\ref{ass:intra-reciprocity},~\ref{ass:inter-reciprocity} and~\ref{ass:cI-bound-eps-uniform-bound-on-delta} are all satisfied.
\end{remark}

Using the formalism of communication in clusters, one has the following result.
\begin{proposition}\label{prop:consensus}
Supposing that Assumptions~\ref{ass:intra-reciprocity},~\ref{ass:inter-reciprocity} and~\ref{ass:cI-bound-eps-uniform-bound-on-delta} hold, the trajectory of system~\eqref{eq:def_sys_derivative} reaches consensus exponentially fast. 
\end{proposition}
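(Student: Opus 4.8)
The plan is to show that system~\eqref{eq:def_sys_derivative} falls within the scope of the known convergence results for cut-balanced networks with persistent connectivity, and then to extract a uniform exponential rate from the quantitative lower bounds contained in the assumptions. Concretely, I would first establish a \emph{global cut-balance} property: there exists a constant $K\ge 1$, depending only on $K_I$ and $K_E$ (and possibly $\eps$), such that
\[
\w{S}{(\NN\setminus S)}(t)\le K\cdot \w{(\NN\setminus S)}{S}(t),\qquad \forall S\subset\NN,\ \forall t\ge 0.
\]
To prove this I would fix $S$ and decompose the cross-cut flow into its intra-cluster part (edges with both endpoints in the same cluster) and its inter-cluster part. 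Writing $S_k=S\cap\C_k$, the intra part is controlled by Assumption~\ref{ass:intra-reciprocity} through $\w{S_k}{(\C_k\setminus S_k)}(t)\le K_I\,\w{(\C_k\setminus S_k)}{S_k}(t)$, which already relates the in-flow and out-flow of $S$ inside each cluster. The inter part is controlled at the aggregate level by~\eqref{eq:persistence-external-weights}, while Remark~\ref{ass:rho} is used to dominate the inter-cluster in-flow of any \emph{split} cluster by $\eps$ times one of its internal cuts, which in turn contributes to the out-flow of $S$.

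Next I would establish \emph{uniform persistent connectivity} of the interaction graph at both scales. Inside each cluster, $c^I(t)\ge\uc>0$ (Assumption~\ref{ass:intra-reciprocity}) forces every internal cut to carry weight bounded below by $\uc$, so each isolated cluster is persistently strongly connected and reaches local agreement exponentially; this is exactly the content announced for Lemma~\ref{l:convergence-exponential-rate}. Across clusters, \eqref{eq:reciprocity-external-weights} together with $c^I(t)\ge\uc$ gives $\sum_{k\in S}\sum_{h\in\MM\setminus S}\w{\C_k}{\C_h}(t)\ge \uc\,\eps\,\ua>0$ for every $S\sqsubset\MM$, so every cut of the aggregate cluster graph carries a uniformly positive flow, and \eqref{eq:persistence-external-weights} makes this flow reciprocal. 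Combining the two scales shows that the persistent graph of the whole network is strongly connected with a persistence level that is uniform in $t$.

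With cut-balance and uniform persistent connectivity in hand, the final step is to invoke the consensus theorem for continuous-time cut-balanced systems~\cite{HendrixTsitsiklis_CutBalanced_IEEETAC2013,SamAntoine_Persistent_SICON2013}: cut-balance forces every trajectory to converge, and uniform persistent connectivity forces the common limit to be a single value $\alpha$. Because the lower bounds $\uc$ and $\uc\,\eps\,\ua$ and the balance constant $K$ are all independent of $t$, one obtains a contraction of the spread $\max_i x_i(t)-\min_i x_i(t)$ by a fixed factor over time windows of fixed length, i.e. exponential convergence.

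I expect the main obstacle to be the global cut-balance step. The subsets $S$ that split a cluster are in fact the benign case: their inter-cluster in-flow is $\eps$-small by Remark~\ref{ass:rho} and can be absorbed into the $K_I$-balanced internal out-flow of the same cluster. The genuinely delicate case is when $S$ is a union of whole clusters, where the internal flows vanish and one must derive an \emph{aggregate} cut-balance purely from~\eqref{eq:persistence-external-weights}; since that inequality is summed over the clusters of $S$ and retains, on both sides, the inter-cluster flow that is internal to $S$, isolating the net cross-cut balance with a clean constant requires combining it with its analogue for $\MM\setminus S$ and exploiting reciprocity. A secondary delicate point is that the persistence estimates are pointwise in $t$, so obtaining a true exponential rate rather than mere asymptotic consensus hinges on the fact that these are \emph{uniform} lower bounds.
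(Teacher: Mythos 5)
Your proposal follows essentially the same route as the paper's proof: verify the cut-balance condition of~\cite{HendrixTsitsiklis_CutBalanced_IEEETAC2013} from Assumptions~\ref{ass:intra-reciprocity}--\ref{ass:cI-bound-eps-uniform-bound-on-delta}, extract a uniformly persistent strongly connected interaction graph from the lower bound $c^I(t)\ge\uc$ together with equation~\eqref{eq:reciprocity-external-weights}, and conclude exponential consensus via Proposition~4 of~\cite{SamAntoine_Persistent_SICON2013}. The only difference is that the paper delegates the cut-balance verification to a companion technical report, whereas you sketch it directly and correctly single out the union-of-whole-clusters cut as the delicate case.
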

\begin{proof}
As shown in~\cite[Theorem 1]{MartinMorarescuNesic_technical_com}, Assumptions~\ref{ass:intra-reciprocity},~\ref{ass:inter-reciprocity} and~\ref{ass:cI-bound-eps-uniform-bound-on-delta} ensure the cut-balance condition in \cite{HendrixTsitsiklis_CutBalanced_IEEETAC2013}. Moreover, \eqref{eq:reciprocity-external-weights} and the lower-bound imposed on $c^I(t)$ guarantee that there exists a strongly connected graph $\G=(\NN,\E)$ such that 
\[\int_0^ta_{ij}(s)ds\geq \eps t\ua\uc,\quad \forall (j,i)\in\E\] and following \cite[Proposition 4]{SamAntoine_Persistent_SICON2013} one obtains the exponential convergence toward consensus.
\end{proof}
\begin{remark} Proposition \ref{prop:consensus} applies in the more general setting of integral equations with non smooth $a_{ij}$ (see \cite{HendrixTsitsiklis_CutBalanced_IEEETAC2013}) but in the present study, we use only smooth $a_{ij}$ due to the singular perturbation analysis that is presented in the next section.\end{remark}

From Proposition~\ref{prop:consensus}, it is known that the trajectory of system~\eqref{eq:def_sys_derivative} will converge to consensus. It remains to characterize this convergence. In particular, we want to show that convergence occurs faster within clusters than between clusters. Therefore aggregating the nodes inside clusters yields a smaller dimension model that approximates the overall dynamics. The analysis is carried out for general time-varying, non-symmetric communication weights.
\subsection{Contributions}
Firstly, we show that under Assumptions~\ref{ass:intra-reciprocity},~\ref{ass:inter-reciprocity} and~\ref{ass:cI-bound-eps-uniform-bound-on-delta} the system  \eqref{eq:def_sys_derivative} rewrites in a singular perturbation form:
 \begin{equation}\label{eq:y-z-fast-dynamics-rescaled}
\left\{
\begin{split}
\frac{d\hat{y}}{dt_f}(t_f) &= \eps A_{11}(t_f,\eps)\hat{y}(t_f)+\eps A_{12}(t_f,\eps)\hat{z}(t_f),\\
\frac{d\hat{z}}{dt_f}(t_f)&= \eps A_{21}(t_f,\eps)\hat{y}(t_f)+A_{22}(t_f,\eps)\hat{z}(t_f),
\end{split}\right.
\end{equation} 
where $\eps$ is a small parameter, $t_f$ represents the fast time-scale that will be defined as well as all the matrices involved in the dynamics. The matrices $A_{ij}(t_f,\eps)$ will be shown to be continuously
differentiable in their arguments and the norms of matrices $A_{11}(t_f,\eps)$, $A_{12}(t_f,\eps)$ and $A_{21}(t_f,\eps)$ will be proven to be bounded above by a constant while the norm of $A_{22}(t_f,\eps)$ is bounded below away from $0$ (see equation~\eqref{eq:norm-Aij-bound}). The bounds are uniform in time $t_f$ and in $\eps$.
Roughly speaking, $\hat{y}\in\R^m$ and $\hat{z}\in\R^{n-m}$ describe the aggregate state of the clusters and the disagreement variables inside clusters, respectively. It is worth mentioning that we will emphasize the presence of a slow time-scale $t_s=\eps t_f$ and show that $\hat{y}$ evolves slowly while $\hat{z}$ evolves fast.
\begin{assumption}\label{ass:averaging}
Suppose that the following limit exists and is independent of $t_f$
\[
A_{av}=\lim_{T\rightarrow\infty}\frac{1}{T}\int_{t_f}^{t_f+T}A_{11}(s,0)ds.
\]
\end{assumption}
We define the reduced (slow) system and boundary layer (fast) system:
\begin{equation}\label{eq:y-z-fast-and-slow}
\left\{
\begin{split}
&\frac{dy_s}{dt_s}(t_s) = A_{av} y_s(t_s);\\
&\frac{dz_f}{dt_f}(t_f) = A_{22}\left(t_f,0\right) z_f(t_f);
\end{split}\right. 
\end{equation}
to approximate the solutions of system~\eqref{eq:y-z-fast-dynamics-rescaled}. The following approximation result is based on Theorem 1 in \cite{Balachandra}.
\begin{theorem}\label{th:main} 
Let $y_s(0) = y_{s,0} \in \R^m$ and $z_f(0) = z_{f,0} \in \R^{n-m}$ some fixed initial conditions to system~\eqref{eq:y-z-fast-and-slow}. Set the same initial conditions to system~\eqref{eq:y-z-fast-dynamics-rescaled}, independently of $\eps$. Then, under Assumptions~\ref{ass:intra-reciprocity},~\ref{ass:inter-reciprocity},~\ref{ass:cI-bound-eps-uniform-bound-on-delta} and~\ref{ass:averaging}, for any fixed $\eps > 0$,
system~\eqref{eq:y-z-fast-dynamics-rescaled} possesses unique bounded solutions $\hat{y}(t_f,\eps)$ and $\hat{z}(t_f,\eps)$ defined on $\R^+$ (here, the dependence in $\eps$ is made explicit), and system~\eqref{eq:y-z-fast-and-slow} possesses unique bounded solutions $y_s$ and $z_f$ defined on $\R^+$. These solutions satisfy
\begin{eqnarray*}
&& \lim_{\eps\rightarrow0} \sup_{t_f\in \R+} \|\hat{y}(t_f,\eps)-y_s(\eps t_f)\| = 0,\\ && \lim_{\eps\rightarrow0} \sup_{t_f\in \R+} \|\hat{z}(t_f,\eps)-z_f(t_f)\| = 0.
\end{eqnarray*}
Furthermore, for any fixed $\eps > 0$, the trajectories $\hat{y}(t_f,\eps)$ and $y_s(\eps t_f)$ converge exponentially fast to consensus and the trajectories $\hat{z}(t_f,\eps)$ and $z_f(t_f)$ converge exponentially fast to $0$, when $t_f \rightarrow \infty$. 
\end{theorem}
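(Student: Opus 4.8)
The plan is to verify the hypotheses of Theorem~1 in~\cite{Balachandra} for the specific two-time-scale system~\eqref{eq:y-z-fast-dynamics-rescaled} and then read off its conclusions, complementing the averaging estimate with the exponential-convergence statements that come from the cluster structure. I would organize the argument in three parts: (i) existence, uniqueness and boundedness of the solutions of both~\eqref{eq:y-z-fast-dynamics-rescaled} and~\eqref{eq:y-z-fast-and-slow}; (ii) the two uniform approximation limits; and (iii) the exponential convergence for fixed $\eps$.

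For part (i), since~\eqref{eq:def_sys_derivative} has weights $a_{ij}$ that are smooth and uniformly bounded, system~\eqref{eq:y-z-fast-dynamics-rescaled} is a linear time-varying ODE with continuous, uniformly bounded coefficients (this uses the uniform bounds on $A_{ij}(t_f,\eps)$ recorded in~\eqref{eq:norm-Aij-bound}), so it admits unique solutions defined on all of $\R^+$, and their boundedness is inherited from Proposition~\ref{prop:consensus}, which guarantees that the trajectory of~\eqref{eq:def_sys_derivative} stays bounded and reaches consensus. For~\eqref{eq:y-z-fast-and-slow}, the slow part $\dot{y}_s = A_{av} y_s$ is a constant-coefficient linear ODE, so existence and uniqueness are immediate, while the boundary-layer part $\dot{z}_f = A_{22}(t_f,0) z_f$ is again linear time-varying with bounded coefficients; the boundedness of these solutions will follow from the stability properties established in part (iii).

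The heart of the proof is part (ii). Here I would check the hypotheses of~\cite[Theorem~1]{Balachandra}: system~\eqref{eq:y-z-fast-dynamics-rescaled} is already in the required singularly perturbed two-time-scale form, the matrices $A_{ij}(t_f,\eps)$ are $C^1$ and satisfy the uniform bounds announced in~\eqref{eq:norm-Aij-bound}, and Assumption~\ref{ass:averaging} supplies the well-defined average $A_{av}$ of $A_{11}(\cdot,0)$. The remaining hypotheses are stability conditions. First, one must show that the boundary-layer system $\dot{z}_f = A_{22}(t_f,0) z_f$ is \emph{uniformly exponentially stable}, which I would obtain from Assumption~\ref{ass:intra-reciprocity} via Lemma~\ref{l:convergence-exponential-rate}, since $\hat{z}$ collects the intra-cluster disagreement variables whose dynamics decay at an exponential rate uniform in $t_f$. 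Second, one must show that the averaged slow system $\dot{y}_s = A_{av} y_s$ reaches consensus exponentially; this follows because $A_{av}$ inherits a Laplacian-type structure from the inter-cluster coupling, with a simple zero eigenvalue associated with the direction $\1_m$ and all other eigenvalues in the open left half-plane by virtue of Assumption~\ref{ass:inter-reciprocity}. With these verified, Theorem~1 of~\cite{Balachandra} yields the two uniform limits on $\R^+$.

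For part (iii), for any fixed $\eps>0$ the exponential convergence of $\hat{y}(t_f,\eps)$ to consensus and of $\hat{z}(t_f,\eps)$ to $0$ is a direct consequence of Proposition~\ref{prop:consensus} together with the interpretation of $\hat{y}$ and $\hat{z}$ as the aggregate and disagreement coordinates, while the corresponding statements for $y_s$ and $z_f$ are exactly the stability properties of $A_{av}$ and $A_{22}(t_f,0)$ established in part (ii). The main obstacle I anticipate is part (ii), and within it the careful treatment of the consensus direction: because the consensus mode corresponds to a zero eigenvalue of $A_{av}$, the averaging theorem cannot be applied verbatim with the origin as attractor, so the stability used to get a \emph{uniform} estimate on the unbounded interval $\R^+$ should be argued in coordinates transverse to consensus (for instance after projecting out the conserved weighted average that defines the aggregate variables), and then recombined with the trivial dynamics along $\1_m$ to match the consensus values of $\hat{y}$ and $y_s$.
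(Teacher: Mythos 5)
Your proposal follows essentially the same route as the paper: reduce the theorem to a verification of the hypotheses of Theorem~1 in~\cite{Balachandra}, obtain the boundary-layer stability from the intra-cluster assumptions (Lemmas~\ref{l:convergence-exponential-rate} and~\ref{l:LE0-implies-doty-0}), and obtain the exponential consensus of the averaged slow system from the inter-cluster assumptions. There is, however, one concrete step missing in your verification of the averaging hypothesis. In the framework of~\cite{Balachandra}, the averaged vector field $f_0$ is the time average of the \emph{full} slow right-hand side evaluated along the boundary-layer solution, that is of $A_{11}(s,0)y + A_{12}(s,0)z_f(s)$, and not merely of $A_{11}(s,0)$; Assumption~\ref{ass:averaging} only supplies the average of $A_{11}(\cdot,0)$. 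To conclude that $f_0(\tau,y) = A_{av}y$ you must additionally show that the term $A_{12}(s,0)z_f(s)$ contributes nothing to the average, which the paper does by combining the uniform bound $\|A_{12}(s,0)\|_\infty \le 2$ from~\eqref{eq:norm-Aij-bound} with the exponential decay of $z_f$ to zero. Without this, the averaged equation you feed into the theorem is not justified.

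A secondary difference concerns the averaged slow system. You argue exponential consensus of $\dot{y}_s = A_{av}y_s$ spectrally (simple zero eigenvalue along $\1_m$, remaining spectrum in the open left half-plane), whereas the paper proves in Proposition~\ref{prop:consensus-ys} that the averaged system is itself a consensus system whose weights $a^s_{kl}$ inherit cut-balance and persistence from Assumption~\ref{ass:inter-reciprocity} and Lemma~\ref{l:invariant_qk_and_bounds}, and then invokes the same convergence result used for Lemma~\ref{l:convergence-exponential-rate}. Your spectral route is legitimate for the constant matrix $A_{av}$, but the claim that zero is a simple eigenvalue is not automatic from ``Laplacian-type structure'': it requires showing that the averaged inter-cluster graph is strongly connected, which is precisely what equations~\eqref{eq:persistence-external-weights} and~\eqref{eq:reciprocity-external-weights} together with the uniform positivity of $q^k_i(t)$ deliver, and this should be said explicitly. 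Your closing concern about the consensus direction is well taken; the paper handles it only implicitly, by checking boundedness and exponential convergence to consensus for H3--H4 rather than asymptotic stability of the origin, so making the projection transverse to $\1_m$ explicit would if anything be more careful than the published argument.
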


\section{Two time-scale modeling}\label{sec:model-reduction}

Using the matrix notation, system~\eqref{eq:def_sys_derivative} can be represented as
\vspace{-0.1cm}
\begin{equation}\label{eq:matrix-form}
\dot{x} = -L(t)x = -(L^I(t)+L^E(t))x.
\end{equation}

\subsection{Aggregate state}\label{sec:invariant-intra-cluster-dynamics}

When trying to reduce the model and carry out time-scale separation between internal and external dynamics, difficulties arrise when the internal communication are neither constant nor symmetric as it was assumed in~\cite{Arcak2007,ChowKokotovic,Bulo2012}. In the present section, we introduce some time-varying weights defining a time-invariant weighted average for stand alone clusters, which defines the aggregate state of the cluster.\\
Let us consider the isolated internal dynamics :
\begin{equation}\label{sys:isolated-internal-dynamics}
 \dot{\tilde{x}}_{\C_k}(t) = -L^k(t) \tilde{x}_{\C_k}(t),
\end{equation}
which corresponds to the internal dynamics $x_{\C_k}(t)$ when $L^E(t) = 0$, where $x_{\C_k}(t)$ denotes the vector collecting states $x_i(t)$ for $i\in \C_k$.
We define, for all $t,\tau \ge 0$ with $t\le \tau$, the fundamental matrix $\Phi^k(t,\tau)$ of system~\eqref{sys:isolated-internal-dynamics} such that
\begin{equation}\label{eq:fundamental-matrix}
 \tilde{x}_{\C_k}(\tau) = \Phi^k(t,\tau) \tilde{x}_{\C_k}(t).
\end{equation}
As a direct consequence of Lemma 6 in \cite{MartinHendrickx2015}, for all $t,\tau\ge 0$ with $\tau\ge t$, for any $i,j \in \C_k$, weight $\Phi^k_{ij}(t,\tau)$ is non-negative and
\begin{equation}\label{eq:Phi-sum-to-one}
\sum_{j \in \C_k} \Phi^k_{ij}(t,\tau) = 1. 
\end{equation}
\begin{lemma}\label{l:convergence-exponential-rate}
Under Assumption~\ref{ass:intra-reciprocity} (intra-cluster communication) system~\eqref{sys:isolated-internal-dynamics} converges to consensus at exponential rate.
\end{lemma}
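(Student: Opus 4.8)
The plan is to treat the isolated cluster $\C_k$ as a stand-alone consensus network on $n_k$ agents and to recognize that Assumption~\ref{ass:intra-reciprocity} endows this network with exactly the two structural properties — reciprocity (cut-balance) and uniform persistence of the cut weights — that are known to force exponential convergence to consensus. Concretely, I would reuse the mechanism already exploited in the proof of Proposition~\ref{prop:consensus}, but now restricted to the sub-network induced by $\C_k$ and relying solely on Assumption~\ref{ass:intra-reciprocity}.

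First I would record that the reciprocity inequality $\w{S}{(\C_k\setminus S)}(t)\le K_I\,\w{(\C_k\setminus S)}{S}(t)$, valid for every non-trivial $S\sqsubset\C_k$ and every $t\ge0$, is precisely the cut-balance condition of \cite{HendrixTsitsiklis_CutBalanced_IEEETAC2013} written for the internal adjacency matrix $A^k(t)$. Together with the stochasticity property~\eqref{eq:Phi-sum-to-one} of the fundamental matrix $\Phi^k(t,\tau)$, this alone already guarantees that the trajectory of~\eqref{sys:isolated-internal-dynamics} converges and that no agent of $\C_k$ is asymptotically neglected.

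Second, I would upgrade this to convergence to a single consensus value at an exponential rate by using the lower bound $c^I(t)\ge\uc$. Since $c^I(t)=\min_{k,\,S}\w{S}{(\C_k\setminus S)}(t)$, every cut of $\C_k$ satisfies $\w{S}{(\C_k\setminus S)}(t)\ge\uc$ for all $t$, hence $\int_{t}^{t+T}\w{S}{(\C_k\setminus S)}(s)\,ds\ge\uc\,T$. This linear growth of the integrated cut weight is exactly the persistent-connectivity property required by \cite[Proposition~4]{SamAntoine_Persistent_SICON2013}; combined with the reciprocity constant $K_I$, that proposition yields exponential convergence of~\eqref{sys:isolated-internal-dynamics} to consensus, with a rate expressible through $\uc$, $K_I$ and $n_k$.

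The delicate point is the same bookkeeping step as in Proposition~\ref{prop:consensus}: the hypotheses of \cite{SamAntoine_Persistent_SICON2013} are stated along a fixed strongly connected graph of persistent edges, whereas Assumption~\ref{ass:intra-reciprocity} only bounds the aggregate weight crossing each cut from below. I would therefore argue, exactly as in the final display of the proof of Proposition~\ref{prop:consensus}, that a uniform cut-wise lower bound together with reciprocity allows one to exhibit a strongly connected edge set $\E\subseteq\C_k\times\C_k$ whose integrated weights grow at least linearly, $\int_0^t a_{ij}(s)\,ds\ge \uc\, t$ up to a combinatorial constant, for all $(j,i)\in\E$. Establishing this reduction from cut persistence to a persistent spanning structure is the only real obstacle; everything else is a direct specialization of the already-cited convergence results. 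As an alternative self-contained route, one could instead track the spread $V(t)=\max_{i\in\C_k}\tilde{x}_i(t)-\min_{i\in\C_k}\tilde{x}_i(t)$, show it is non-increasing, and use the cut-balance bound together with the lower bound $\uc$ to prove a uniform one-window contraction $V(t+T)\le(1-\lambda)V(t)$ with $\lambda\in(0,1)$ depending only on $\uc,K_I,n_k$, which delivers the exponential rate directly; here obtaining the contraction estimate over a fixed window $T$ is the corresponding main obstacle.
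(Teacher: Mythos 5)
Your proof takes essentially the same route as the paper, which disposes of this lemma in a single line by declaring it a direct consequence of Proposition~4 in \cite{SamAntoine_Persistent_SICON2013}; your identification of cut-balance (via $K_I$) and persistence of the integrated cut weights (via the lower bound $\uc$) as the two hypotheses to verify is exactly the reasoning the paper itself spells out for the analogous Proposition~\ref{prop:consensus}. The bookkeeping step you flag --- passing from a uniform lower bound on every directed cut to a persistent strongly connected edge set --- is the only substantive content, and the paper leaves it just as implicit as you do, so your proposal is, if anything, more complete than the paper's own argument.
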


Lemma~\ref{l:convergence-exponential-rate} is a direct consequence of Proposition~4~\cite{SamAntoine_Persistent_SICON2013}. In the sequel, we assume that Assumption~\ref{ass:inter-reciprocity} is satisfied so that $\tilde{x}_{\C_k}$ converges to a consensus at exponential speed.
We denote the limit matrix $\Phi^k(t,\infty) = \lim_{\tau \rightarrow \infty} \Phi^k(t,\tau)$, which exists thanks to Lemma~\ref{l:convergence-exponential-rate}. Moreover, since $\tilde{x}_{\C_k}$ converges to a consensus independently of the inital conditions, there exists $q^k(t) \in \R^{n_k}$ such that
\vspace{-0.2cm}
\begin{equation}\label{eq:def_of_q_k}
\Phi^k(t,\infty) = \un \cdot (q^k(t))^T.
\end{equation}
Vector $q^k(t)$ plays an important role in the rest of the study, in particular it serves to define an invariant for the isolated internal dynamics~\eqref{sys:isolated-internal-dynamics} as given in the next lemma.

\begin{lemma}\label{l:invariant_qk_and_bounds}
The vector $q^k(t)$ satisfies $$q^k(t)^T \cdot \un = 1$$ and its components $q_i^k(t)$ are positive uniformly bounded \ie for all $t\ge 0$ and $i \in \C_k$,
$$ q_i^k(t) \in [q_{min},q_{max}] \in (0,1),$$
\vspace{-0.3cm}
with
\begin{equation*}
\begin{array}{l}
q_{min} =\left(\exp(-K_I) / \bar{n}\right)^{\bar{n}-1},
\; q_{max} = 1 - (\underbar{n} - 1) q_{min},
\end{array}
\end{equation*}
where $\underbar{n} = \min_{k \in \{1,\ldots,m\}} n_k$ and $\bar{n} = \max_{k \in  \MM } n_k$.
Furthermore, 
quantity $q^k(t)^T \cdot \tilde{x}_{\C_k}(t)$ is invariant in time, \ie
$$
\forall t\ge 0, q^k(t)^T \cdot \tilde{x}_{\C_k}(t) = q^k(0)^T \cdot \tilde{x}_{\C_k}(0).
$$
\end{lemma}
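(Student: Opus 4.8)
The plan is to split the statement into three parts: the normalization $q^k(t)^\top \un = 1$, the componentwise bounds, and the time-invariance of $q^k(t)^\top \tilde{x}_{\C_k}(t)$. The first and third are short consequences of properties already established for $\Phi^k$, while the lower bound $q_{min}$ is the real work.

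First I would settle the normalization. By~\eqref{eq:Phi-sum-to-one} each row of $\Phi^k(t,\tau)$ sums to one and, by the same result, has nonnegative entries. Passing to the limit $\tau\to\infty$ and using~\eqref{eq:def_of_q_k}, every entry of row $i$ of $\Phi^k(t,\infty)=\un (q^k(t))^\top$ equals $q^k_j(t)$, so $\sum_{j\in\C_k} q^k_j(t)=1$ and $q^k_j(t)\ge 0$. Granting the lower bound $q^k_i(t)\ge q_{min}$ for every $i$, the upper bound is then immediate: for each $i$, $q^k_i(t)=1-\sum_{j\ne i} q^k_j(t)\le 1-(n_k-1)q_{min}\le 1-(\underbar{n}-1)q_{min}=q_{max}$, using $n_k\ge\underbar{n}$. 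This also shows $[q_{min},q_{max}]\subset(0,1)$ once $q_{min}\in(0,1)$ is known, and $K_I\ge1$, $\bar n\ge 1$ give $e^{-K_I}/\bar n\in(0,1)$.

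Next, the invariance. Fix a trajectory $\tilde{x}_{\C_k}$ of~\eqref{sys:isolated-internal-dynamics}; by Lemma~\ref{l:convergence-exponential-rate} it converges, as $\tau\to\infty$, to a consensus vector $\alpha_k\un$ for a single scalar $\alpha_k$. For every $t\ge0$ we have $\tilde{x}_{\C_k}(\tau)=\Phi^k(t,\tau)\tilde{x}_{\C_k}(t)$ for $\tau\ge t$ by~\eqref{eq:fundamental-matrix}; letting $\tau\to\infty$ and using~\eqref{eq:def_of_q_k} yields $\alpha_k\un=\un\,\big(q^k(t)^\top \tilde{x}_{\C_k}(t)\big)$, hence $q^k(t)^\top\tilde{x}_{\C_k}(t)=\alpha_k$ for all $t$. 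Since $\alpha_k$ is the limit of the one fixed trajectory, it is independent of $t$, so in particular it equals its value at $t=0$. This is the cleanest route: differentiating $q^k(t)^\top\tilde{x}_{\C_k}(t)$ directly would first require establishing the adjoint dynamics $\dot q^k=(L^k)^\top q^k$, which needs the derivative of $\Phi^k(t,\infty)$.

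The main obstacle is the uniform lower bound $q^k_i(t)\ge q_{min}$. Since $\Phi^k(t,\infty)e_i=\un\, q^k_i(t)$, the quantity $q^k_i(t)$ is exactly the consensus value reached by~\eqref{sys:isolated-internal-dynamics} started at time $t$ from the unit vector $e_i$, so the task is to show a single node cannot lose all of its influence. I would obtain this from Assumption~\ref{ass:intra-reciprocity}: the cut-balance inequality with constant $K_I$ forces any outgoing flow from a subset to be reciprocated, within the factor $K_I$, by an incoming flow, so influence propagates in both directions across the connected cluster. Following the influence-weight analysis of \cite{MartinHendrickx2015}, one shows that each node transmits to a new node a guaranteed fraction of its current influence, where an interval on which the normalized incoming weight is of order one contributes a factor $e^{-K_I}$ (bounding $\exp(-\int \text{out-weight})$ through cut-balance) and the spread over the at most $\bar n$ candidate nodes contributes the factor $1/\bar n$; iterating this over a spanning propagation reaching all $n_k\le\bar n$ nodes in at most $\bar n-1$ steps produces the claimed $q_{min}=(e^{-K_I}/\bar n)^{\bar n-1}$. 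The delicate point, and the reason the bound is independent of $\uc$ and of time, is that $q^k$ is invariant under a common time-rescaling of the weights, so only the reciprocity ratio $K_I$ and the cluster size enter; making the propagation argument uniform in $t$ is where the bulk of the effort lies.
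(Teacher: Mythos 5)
Your proposal is correct and follows essentially the same route as the paper: the normalization and the upper bound come from row-stochasticity of $\Phi^k(t,\infty)$, the invariance from letting $\tau\to\infty$ in $\tilde{x}_{\C_k}(\tau)=\Phi^k(t,\tau)\tilde{x}_{\C_k}(t)$, and the lower bound $q_{min}$ from the cut-balance influence-propagation argument of \cite{MartinHendrickx2015}, which is exactly what the paper's Lemma~\ref{l:Phi_bound_eta}, Lemma~\ref{l:bound-Phi_S_b} and Corollary~\ref{cor:bound_phi_eta} carry out in the appendix. The one part you leave as a sketch is the iteration producing $\left(\exp(-K_I)/\bar{n}\right)^{\bar{n}-1}$, but the mechanism you describe (a factor $e^{-K_I}/\bar{n}$ gained by one newly reached node per unit of accumulated cross-weight, iterated over at most $\bar{n}-1$ steps) is precisely the paper's.
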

\vspace{0.2cm}
\begin{proof}
See Appendix.
\end{proof}
\begin{remark}\label{r:computing-qk}
Computing vector $q^k(t)$ is straightforward when the internal communication weights are constant up to a multiplicative factor, \ie $L^k(t) = l^k(t) L^k(0)$, with $l^k(t) \in \R$ (but not necessarilly symmetric). In this case, $q^k$ is constant, obtained as the left eigenvector of $L^k(0)$ associated to the eigenvalue $0$ such that $(q^k)^T \un_{n_k} = 1$ (see for instance \cite{Olfati-Saber2004}). More generally, $q^k(t)$ can be computed whenever the internal dynamics is simple enough to be computed analytically. In other cases, the time-scale separation still occurs naturally as the present study shows and Lemma~\ref{l:invariant_qk_and_bounds} provides bounds for the invariant $q^k(t)^T \cdot \tilde{x}_{\C_k}(t)$.
\end{remark}

In the following we introduce variables $y(t)\in\R^m$ which corresponds to the cluster-aggregate state collecting weighted averages of the agents' states in each cluster:
\[y_k(t)=(q^k(t))^\top x_{\C_k}(t)\quad \mbox{or} \quad y(t)=J(t)x(t)\in\R^m, \] with $x_{\C_k}$ the vector collecting the states $x_i$ of the nodes $i\in\C_k$ and where  $J(t)$ is the following block-diagonal matrix:
\begin{equation}\label{JH matrix}
J(t)=diag\big((q^1(t))^\top,\
(q^2(t))^\top,\ldots,
(q^m(t))^\top\big)\in\R^{m\times n}.
\end{equation}
The next lemma states that in each cluster $k$, the weight vector $q^k(t)$ is taken so that $y$ is invariant through system~\eqref{eq:matrix-form} when no inter-cluster communication takes place. This justifies why $y(t)$ is a good candidate to represent the cluster-aggregate states.

\begin{lemma}\label{l:LE0-implies-doty-0}
Suppose Assumption~\ref{ass:cI-bound-eps-uniform-bound-on-delta} holds. When $\eps$ (defined in Assumption~\ref{ass:cI-bound-eps-uniform-bound-on-delta}) equals $0$, no inter-cluster communication takes place, and in this case, $y$ is invariant in time \ie 
\vspace{-0.2cm}
\begin{equation}\label{eq:LE0-implies-doty-0}
\eps = 0 \Rightarrow L^E(t)=0 \Rightarrow \dot{y}(t)=0.
\end{equation}
In this case, if Assumptions~\ref{ass:intra-reciprocity} (intra-cluster communication) is also satisfied, the trajectory $x_{\C_k}$ in each cluster $k$ is identical to $\tilde{x}_{\C_k}$ and converges at exponential speed to a local consensus with consensus value $y_k(0)$.
\end{lemma}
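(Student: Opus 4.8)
The plan is to chase the two implications in~\eqref{eq:LE0-implies-doty-0} separately and then identify the limit of the intra-cluster trajectory using the machinery already in place.

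First I would show that $\eps = 0$ forces $L^E(t)=0$. By Assumption~\ref{ass:cI-bound-eps-uniform-bound-on-delta} we have $\gamma^E(t)=\delta(t)\,c^I(t)\le \eps\, c^I(t)$, and since the weights $a_{ij}$ are uniformly upper-bounded, $c^I(t)$ is finite for every $t$; setting $\eps=0$ therefore gives $\gamma^E(t)\le 0$. As $\gamma^E(t)=\max_{k\in\MM}\w{\C_k}{(\NN\setminus\C_k)}(t)$ is a maximum of sums of the non-negative weights $a_{ij}(t)$, this forces every such sum to vanish, hence $a_{ij}(t)=0$ whenever $i$ and $j$ lie in distinct clusters. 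Consequently $A^E(t)=0$ and $L^E(t)=L(t)-L^I(t)=0$ for all $t\ge 0$, which is the first arrow of~\eqref{eq:LE0-implies-doty-0}.

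Second, with $L^E(t)=0$ the dynamics~\eqref{eq:matrix-form} reduces to $\dot{x}=-L^I(t)x$, and because $L^I(t)=diag(L^1(t),\ldots,L^m(t))$ is block-diagonal the system decouples cluster by cluster into $\dot{x}_{\C_k}=-L^k(t)x_{\C_k}$. This is exactly the isolated internal dynamics~\eqref{sys:isolated-internal-dynamics}, so with identical initial data $x_{\C_k}(0)=\tilde{x}_{\C_k}(0)$ uniqueness of solutions yields $x_{\C_k}(t)=\tilde{x}_{\C_k}(t)$ for all $t$. Hence $y_k(t)=(q^k(t))^\top x_{\C_k}(t)=(q^k(t))^\top\tilde{x}_{\C_k}(t)$, which Lemma~\ref{l:invariant_qk_and_bounds} asserts is invariant in time; therefore $\dot{y}_k(t)=0$ for each $k$ and $\dot{y}(t)=0$, completing the second arrow.

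Third, adding Assumption~\ref{ass:intra-reciprocity}, Lemma~\ref{l:convergence-exponential-rate} gives that $\tilde{x}_{\C_k}$—and thus $x_{\C_k}$—converges to consensus at an exponential rate. To identify the consensus value I would take $\tau\to\infty$ in the fundamental-matrix relation~\eqref{eq:fundamental-matrix} and use $\Phi^k(t,\infty)=\un\,(q^k(t))^\top$ from~\eqref{eq:def_of_q_k}, giving $\lim_{\tau\to\infty}\tilde{x}_{\C_k}(\tau)=\un\,(q^k(t))^\top\tilde{x}_{\C_k}(t)=\un\, y_k(t)$. By the invariance established above $y_k(t)=y_k(0)$, so the local consensus value is exactly $y_k(0)$, as claimed. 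The only real subtlety is the first step: one must ensure that $\eps=0$ genuinely annihilates all external interactions rather than merely bounding their ratio to internal ones, which is why the finiteness of $c^I(t)$ is invoked; everything downstream is a clean consequence of the block structure together with Lemmas~\ref{l:convergence-exponential-rate} and~\ref{l:invariant_qk_and_bounds}.
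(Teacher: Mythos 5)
Your proof is correct, and for the bulk of the argument it follows the paper's own route: decoupling under $L^E(t)=0$ so that $x_{\C_k}=\tilde{x}_{\C_k}$, invariance of $y$ from Lemma~\ref{l:invariant_qk_and_bounds}, exponential convergence from Lemma~\ref{l:convergence-exponential-rate}, and identification of the limit as $y_k(0)$ (the paper passes to the limit in $y_k(t)=(q^k(t))^\top x_{\C_k}(t)$ using $(q^k)^\top\un_{n_k}=1$, you take $\tau\to\infty$ in $\Phi^k(t,\tau)$ via~\eqref{eq:def_of_q_k} --- these are the same computation). The one place you genuinely diverge is the first implication $\eps=0\Rightarrow L^E(t)=0$: you derive it from Assumption~\ref{ass:cI-bound-eps-uniform-bound-on-delta}, i.e.\ $\gamma^E(t)\le\eps\,c^I(t)$ together with finiteness of $c^I(t)$ and non-negativity of the weights, whereas the paper attributes it to the second inequality of Assumption~\ref{ass:inter-reciprocity}, namely $\sum_{k\in S}\sum_{h\in\MM\setminus S}\w{\C_k}{\C_h}(t)\ge c^I(t)\,\eps\,\ua$. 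That inequality is a \emph{lower} bound which becomes vacuous at $\eps=0$ and cannot by itself annihilate the external weights, so your derivation is the one that actually carries the implication, and it is also the one consistent with the lemma's hypothesis, which only invokes Assumption~\ref{ass:cI-bound-eps-uniform-bound-on-delta} for this part. In short: same architecture, but your justification of the first arrow is sounder than the reference given in the paper.
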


\begin{proof}
The first part of statement~\eqref{eq:LE0-implies-doty-0} is a consequence of the second inequality in Assumption~\ref{ass:inter-reciprocity}.
The second part of statement~\eqref{eq:LE0-implies-doty-0} is a direct consequence of Lemma~\ref{l:invariant_qk_and_bounds}. The convergence to consensus comes from Lemma~\ref{l:convergence-exponential-rate}. 
Denote $x_{\C_k}^* \in \R$ the consensus value. Then, $\lim_{t\rightarrow \infty} x_{\C_k}(t) = x_{\C_k}^* \un$ and $$y_k(0) = \lim_{t\rightarrow \infty} y_k(t) = x_{\C_k}^* \lim q^k(t)^T \un_{n_k} = x_{\C_k}^*,$$
where we successively used equation~\eqref{eq:LE0-implies-doty-0}, the definition of $y_k$ and Lemma~\ref{l:invariant_qk_and_bounds}.
\end{proof}

\subsection{Disagreement variables}\label{sec:time-scale-separation}

In the following, we provide a change of variable adapted to the clustered communication with non-symmetric and time-varying weights. First, 
In order to characterize the disagreement inside each cluster, let us introduce the variable $z(t) \in \R^{n-m}$ which represents for all $k\in\MM$ the distances to the first agent in $\C_k$ :
$$z(t) = Qx(t),$$
with $Q = diag(Q_1,Q_2, \ldots, Q_m)$ where
\[
Q_k=\left[
\begin{array}{ccccc}
-1& 1 & 0 &\ldots & 0\\
 -1& 0 & 1 &\ldots &0\\
 \vdots & \vdots & \vdots &\vdots& \vdots\\
 -1& 0 & 0 &\ldots &1\\
\end{array}\right]\in\R^{(n_k-1)\times n_k}.\ 
\]
The initial state variable $x(t)$ can be written in function of $y(t)$ and $z(t)$ as follows :
\begin{equation}\label{eq:x-function-of-y-and-z}
x(t)=Hy(t)+\tilde{Q}(t) z(t),
\end{equation}
where $H=diag\big(\1_{n_1},\1_{n_2},\ldots,\1_{n_m}\big)\in\R^{n\times m}$, $\tilde{Q}(t)=diag(\tilde{Q}_1(t),\tilde{Q}_2(t),\ldots,\tilde{Q}_m(t))$ and
$\tilde{Q}_k(t) \in\R^{n_k\times (n_k-1)}$ such that
\[
\tilde{Q}_k(t)=\left[\begin{array}{cccc}
-q^k_2(t)& -q^k_3(t)  &\ldots & -q^k_{n_k}(t)\\
1-q^k_2(t)& -q^k_3(t)  &\ldots &-q^k_{n_k}(t)\\
-q^k_2(t)& 1-q^k_3(t)  &\ldots &-q^k_{n_k}(t)\\
 \vdots & \vdots  &\vdots& \vdots\\
-q^k_2(t)& -q^k_3(t)  &\ldots &1-q^k_{n_k}(t)\\
\end{array}\right].
\]
Matrices $Q,\tilde{Q}(t)$ were chosen to verify the following lemma.
\begin{lemma}\label{Q-properties}
The following properties hold true
\begin{itemize}
\item $Q_k\1_{n_k}=0,\ (q^k(t))^\top\tilde{Q}_k(t)=0,$
\item $Q_k\tilde{Q}_k(t)=I_{n_k-1}$
\item $\tilde{Q}_k(t) Q_k=I_{n_k}-\1_{n_k}(q^k(t))^\top$
\end{itemize}
\end{lemma}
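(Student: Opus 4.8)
The plan is to verify the four identities by elementary linear algebra, relying only on the explicit structure of $Q_k$ and $\tilde{Q}_k(t)$ and on the normalization $(q^k(t))^\top\1_{n_k}=1$ established in Lemma~\ref{l:invariant_qk_and_bounds}. I would first dispatch the three simpler identities by inspecting individual rows and columns, and then obtain the last, least transparent identity $\tilde{Q}_k(t)Q_k=I_{n_k}-\1_{n_k}(q^k(t))^\top$ as a consequence of the other three through a basis argument, so as to avoid cumbersome index bookkeeping. Throughout I write $e_i$ for the $i$-th canonical basis vector and suppress the time argument.

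For the first group I would note that the $i$-th row of $Q_k$ is $(e_{i+1}-e_1)^\top$, so each row sums to zero and $Q_k\1_{n_k}=0$ is immediate. Reading $\tilde{Q}_k$ by columns, its $j$-th column is $e_{j+1}-q^k_{j+1}\1_{n_k}$; multiplying on the left by $(q^k)^\top$ and invoking $(q^k)^\top\1_{n_k}=1$ collapses it to $q^k_{j+1}-q^k_{j+1}=0$, giving $(q^k)^\top\tilde{Q}_k=0$. For $Q_k\tilde{Q}_k$ I would compute the $(i,j)$ entry as $(e_{i+1}-e_1)^\top(e_{j+1}-q^k_{j+1}\1_{n_k})$; since $e_1^\top e_{j+1}=0$ for $j\ge 1$ and $e_{i+1}^\top\1_{n_k}=e_1^\top\1_{n_k}=1$, the weighted terms cancel and only $\delta_{i+1,j+1}=\delta_{ij}$ survives, so $Q_k\tilde{Q}_k=I_{n_k-1}$.

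For the remaining identity I would show that both sides act identically on a basis of $\R^{n_k}$. Since $(q^k)^\top\1_{n_k}=1\neq 0$, the vector $\1_{n_k}$ avoids the hyperplane $\{v:(q^k)^\top v=0\}$, which by $(q^k)^\top\tilde{Q}_k=0$ together with the full column rank implied by $Q_k\tilde{Q}_k=I_{n_k-1}$ is precisely the column space of $\tilde{Q}_k$; hence $\1_{n_k}$ and the columns of $\tilde{Q}_k$ together form a basis. On $\1_{n_k}$ the left side yields $\tilde{Q}_k(Q_k\1_{n_k})=0$ and the right side yields $\1_{n_k}-\1_{n_k}\big((q^k)^\top\1_{n_k}\big)=0$; on any $v=\tilde{Q}_k w$ the left side yields $\tilde{Q}_k(Q_k\tilde{Q}_k)w=\tilde{Q}_k w=v$ and the right side yields $\tilde{Q}_k w-\1_{n_k}\big((q^k)^\top\tilde{Q}_k w\big)=\tilde{Q}_k w=v$. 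Agreeing on a basis, the two matrices coincide.

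I do not anticipate a genuine obstacle, as the content is routine matrix manipulation; the only delicate point is the fourth identity, where a direct entrywise expansion would force one to juggle several Kronecker-delta cases against the constraint $\sum_i q^k_i=1$ at once. The basis argument above is the step I would emphasize precisely because it removes this bookkeeping and reduces everything to the three identities already in hand.
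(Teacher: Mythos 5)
Your proof is correct, and for three of the four identities it coincides with the paper's argument: the paper also dismisses $Q_k\1_{n_k}=0$ and $Q_k\tilde{Q}_k(t)=I_{n_k-1}$ as immediate and derives $(q^k(t))^\top\tilde{Q}_k(t)=0$ from the normalization $(q^k(t))^\top\1_{n_k}=1$. Where you genuinely diverge is the last identity $\tilde{Q}_k(t)Q_k=I_{n_k}-\1_{n_k}(q^k(t))^\top$: the paper computes the entries of the first column of the product directly (finding $1-q_1^k$ in the first position and $-q_1^k$ elsewhere, with the remaining columns read off from the structure of $Q_k$), whereas you deduce it from the first three identities by checking that both sides agree on the basis of $\R^{n_k}$ formed by $\1_{n_k}$ together with the columns of $\tilde{Q}_k(t)$. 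Your route trades a small amount of index bookkeeping for a short abstract argument; it also makes transparent \emph{why} the identity holds, namely that $\tilde{Q}_k(t)Q_k$ and $I_{n_k}-\1_{n_k}(q^k(t))^\top$ are both the projection onto the hyperplane $\{v:(q^k(t))^\top v=0\}$ along $\1_{n_k}$. The paper's computation is more self-contained in that it does not lean on the other bullets, but both proofs are complete and equally rigorous.
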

\begin{proof}
The first property is obvious. The second comes from property $(q^k(t))^\top\1_{n_k}=1$ meaning that $\sum_{i=1}^{n_k}q_i^k(t)=1$ (see Lemma~\ref{l:invariant_qk_and_bounds}). For the third one, we compute each element of the first column of the product $\tilde{Q}_k(t) Q_k$. The first element is $\sum_{i=2}^{n_k}q_i^k(t)$ and is replaced by $1-q_1^k(t)$ while the rest of them are $\sum_{i=2}^{n_k}q_i^k(t)-1$ which is replaced by $-q_1^k(t)$.
\end{proof}

\begin{remark}
We note that definitions of $y(t)$ and $z(t)$ coincide with those used in \cite{ChowKokotovic} if we consider that the interactions are symmetric and constant i.e. $a_{ij}(t)=a_{ji}(t)$.  We also note that, for constant symmetric interactions, the variables $y(t)$ and $z(t)$ can be introduced using another metric 
 \cite{Arcak2007} such that $\tilde{Q}=Q^\top$. 
 \end{remark}

Using equation~\eqref{eq:x-function-of-y-and-z}, system~\eqref{eq:matrix-form} can be rewritten to make explicit the dynamics in terms of variables $y(t) = J(t) x(t)$ and $z(t) = Q x(t)$ :
\vspace{-0.2cm}
\begin{equation}\label{eq:y-z-dynamics}
\left\{
\begin{split}
\dot{y}(t)&= \bar{A}_{11}(t)y(t)+\bar{A}_{12}(t)z(t),\\
\dot{z}(t)&= \bar{A}_{21}(t)y(t)+\bar{A}_{22}(t)z(t),
\end{split}\right.
\end{equation}
\vspace{-0.1cm}
where
\vspace{-0.0cm}
\begin{equation*}\left\{
\begin{split}
\bar{A}_{11}(t) = -J(t)L(t)H, \quad \bar{A}_{12}(t) = -J(t)L(t)\tilde{Q}(t), \\
\bar{A}_{21}(t) = -QL(t)H, \quad \bar{A}_{22}(t) = -QL(t)\tilde{Q}(t).
\end{split}\right.
\end{equation*}

\subsection{Time-scale separation}

The next lemma shows that the aggregate variable $y$ and the intra-cluster disagreement variable $z$ evolve according to different time-scale, parametrized by factor $\eps$.
\begin{lemma}\label{l:orders}
The $\infty$-norms of the matrices in \eqref{eq:y-z-dynamics} satisfy
\begin{equation}
\begin{array}{lllll}
\hspace{-2mm}\|\bar{A}_{11}(t)\|_\infty&\hspace{-2mm}=&\hspace{-2mm}\|J(t)L(t)H\|_\infty&\hspace{-2mm}\leq&\hspace{-2mm}2c^I(t)\eps, \\
\hspace{-2mm}\|\bar{A}_{12}(t)\|_\infty&\hspace{-2mm}=&\hspace{-2mm}\|J(t)L(t)\tilde{Q}(t)\|_\infty&\hspace{-2mm}\leq&\hspace{-2mm}2c^I(t)\eps,\\
\hspace{-2mm}\|\bar{A}_{21}(t)\|_\infty&\hspace{-2mm}=&\hspace{-2mm}\|QL(t)H\|_\infty&\hspace{-2mm}\leq&\hspace{-2mm}2c^I(t)\eps, \\
\hspace{-2mm}\|\bar{A}_{22}(t)\|_\infty&\hspace{-2mm}=&\hspace{-2mm}\|QL(t)\tilde{Q}(t)\|_\infty&\hspace{-2mm}\geq&\hspace{-2mm}(1-8\eps) c^I(t). \\
\end{array}
\end{equation}
\end{lemma}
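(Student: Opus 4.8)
The plan is to exploit the decomposition $L(t)=L^I(t)+L^E(t)$ together with two structural identities. First, each internal Laplacian annihilates the constant vector, so $L^I(t)H=0$; this removes the intra-cluster part from $\bar A_{11}=-JLH$ and $\bar A_{21}=-QLH$, leaving $\bar A_{11}=-JL^EH$ and $\bar A_{21}=-QL^EH$. Second, differentiating the invariant $q^k(t)^\top \tilde x_{\C_k}(t)$ of Lemma~\ref{l:invariant_qk_and_bounds} along~\eqref{sys:isolated-internal-dynamics} gives $\dot q^k(t)^\top = q^k(t)^\top L^k(t)$ (arbitrary initial data span $\R^{n_k}$), that is $\dot J=JL^I$. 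Hence $\dot y=\dot J x+J\dot x=(JL^I-JL)x=-JL^Ex$, so the block multiplying $\hat z$ in the $\hat y$-row is in fact $-JL^E\tilde Q$ (the would-be intra-cluster term $-JL^I\tilde Q$ being cancelled by $\dot J\tilde Q$), and it is this inter-cluster matrix whose norm I estimate. Thus all three ``slow'' blocks involve only $L^E$, i.e.\ only inter-cluster weights.

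For the three upper bounds I would estimate $\|\cdot\|_\infty$ as the maximal absolute row sum directly. Writing $L^E$ explicitly (row $i\in\C_k$ has diagonal entry $\w{i}{(\NN\setminus\C_k)}$ and off-diagonal entries $-a_{ij}$, $j\notin\C_k$, so its absolute row sum is $2\,\w{i}{(\NN\setminus\C_k)}$), every row sum of $JL^EH$, $JL^E\tilde Q$ and $QL^EH$ collapses to a fixed multiple of $\w{\C_k}{(\NN\setminus\C_k)}$, using $q^k(t)^\top\1=1$, $0\le q^k_i\le 1$, the identity $\sum_{h\neq k}\w{i}{\C_h}=\w{i}{(\NN\setminus\C_k)}$, and the two-nonzero-entry structure of the rows of $Q_k$ and $\tilde Q_k$. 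Assumption~\ref{ass:cI-bound-eps-uniform-bound-on-delta}, equivalently $\gamma^E(t)\le\eps\,c^I(t)$, then turns each such row sum into a constant times $\eps\,c^I(t)$, which yields the stated $O(\eps c^I)$ bounds.

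The lower bound on $\|\bar A_{22}\|_\infty$ is the heart of the argument, and I would handle it by splitting $QL\tilde Q=QL^I\tilde Q+QL^E\tilde Q$ and applying the reverse triangle inequality $\|QL\tilde Q\|_\infty\ge\|QL^I\tilde Q\|_\infty-\|QL^E\tilde Q\|_\infty$. The dominant term $QL^I\tilde Q=\mathrm{diag}(Q_kL^k\tilde Q_k)$ is block-diagonal, and using $L^k\1=0$ the $\tilde Q_k$-columns collapse, so that $[Q_kL^k\tilde Q_k]_{c,c'}=L^k_{c+1,c'+1}-L^k_{1,c'+1}$. In particular each diagonal entry equals $\w{(c+1)}{(\C_k\setminus\{c+1\})}+a_{1,c+1}$, which is at least the weight of the singleton cut $\{c+1\}$ and hence at least $c^I(t)$; since row sums of absolute values dominate any single entry, every row sum exceeds $c^I(t)$ and $\|QL^I\tilde Q\|_\infty\ge c^I(t)$. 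For the correction I would use submultiplicativity $\|QL^E\tilde Q\|_\infty\le\|Q\|_\infty\|L^E\|_\infty\|\tilde Q\|_\infty$ with $\|Q\|_\infty=\|\tilde Q\|_\infty=2$ and $\|L^E\|_\infty\le 2\gamma^E\le 2\eps c^I$, giving $\|QL^E\tilde Q\|_\infty\le 8\eps\,c^I(t)$; combining the two yields $\|\bar A_{22}\|_\infty\ge(1-8\eps)\,c^I(t)$.

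The main obstacle is exactly this lower bound. One must (i) obtain the collapsed form of $Q_kL^k\tilde Q_k$, where $L^k\1=0$ and the precise definition of $\tilde Q_k$ eliminate the $q^k$-dependent shifts, and (ii) recognize that a single node's internal degree already dominates the \emph{global} minimum cut $c^I$, so a favourable row can be exhibited without computing eigenvalues or invoking the convergence rate of the isolated dynamics. Care is also needed to keep the $L^E$-correction strictly controlled (the factor $8\eps$), which is what forces $\eps$ small enough that $8\eps<1$ for the bound to be informative.
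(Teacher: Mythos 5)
Your proof is correct and follows essentially the same route as the paper's: reduce the three slow blocks to $L^E$, bound their rows by $2\gamma^E(t)\le 2\eps c^I(t)$, and for $\bar{A}_{22}$ use the reverse triangle inequality with $\|QL^E(t)\tilde{Q}(t)\|_\infty\le\|Q\|_\infty\|L^E(t)\|_\infty\|\tilde{Q}(t)\|_\infty\le 8\eps c^I(t)$ together with the explicit block computation $(Q_kL^k(t)\tilde{Q}_k(t))_{cc'}=L^k_{c+1,c'+1}-L^k_{1,c'+1}$ and a singleton cut to get the $c^I(t)$ lower bound (you read it off the diagonal entry, the paper telescopes the row sum; both are fine). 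The one place you genuinely diverge is $\bar{A}_{12}$: the paper simply asserts $J(t)L(t)=J(t)L^E(t)$, whereas you observe that $\dot{q}^k(t)^\top=q^k(t)^\top L^k(t)$ need not vanish for time-varying internal weights, so that $J(t)L^I(t)\ne 0$ in general, and you instead identify $-J(t)L^E(t)\tilde{Q}(t)$ as the matrix that actually multiplies $z$ in the $y$-dynamics (the intra-cluster term being cancelled by $\dot{J}(t)\tilde{Q}(t)$). This is a more careful justification than the paper's one-line identity and is the right reading of the lemma; just be aware that, taken literally, the displayed equality $\|\bar{A}_{12}(t)\|_\infty=\|J(t)L(t)\tilde{Q}(t)\|_\infty$ is what the statement asks for, so you should say explicitly that you are bounding the coefficient of $z$ in \eqref{eq:y-z-dynamics}, which coincides with $-J(t)L^E(t)\tilde{Q}(t)$ once the $\dot{J}$ contribution is accounted for.
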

\vspace{0.2cm}
The proof of the lemma is given in the Appendix.

The only matrix in system~\eqref{eq:y-z-dynamics} which is not $\OO(c^I(t) \eps)$ is $\bar{A}_{22}$ which corresponds to the influence of $z$ on its dynamics. As a consequence, Lemma~\ref{l:orders} shows that variables $y$ behave as slow variables compared to the $z$ variables. To reveal this fact, we rescale the time with a fast time scale $t_f = \int_0^t c^I(s)ds$ and a slow time scale $t_s = \eps t_f$. 
This rescaling is well defined since according to Assumption~\ref{ass:intra-reciprocity}, $t_f$ diverges when $t\rightarrow \infty$. The transformation $t \mapsto t_f$ is invertible and so is $t \mapsto t_s$. We denote $t=\psi(t_f)$ the reverse transformation. 
Then
$y(t)=y(\psi(t_f))\triangleq\hat{y}(t_f), \ z(t)=z(\psi(t_f))\triangleq\hat{z}(t_f)$. Matrices $\bar{A}_{ij}$ are also rescaled as
\begin{equation*}
\left\{
\begin{array}{cc}
A_{11}(t_f,\eps) = \frac{\bar{A}_{11}(\psi(t_f))}{c^I(\psi(t_f)) \cdot \eps}, & A_{12}(t_f,\eps) = \frac{\bar{A}_{12}(\psi(t_f))}{c^I(\psi(t_f)) \cdot \eps}, \\
A_{21}(t_f,\eps) = \frac{\bar{A}_{21}(\psi(t_f))}{c^I(\psi(t_f)) \cdot \eps}, & A_{22}(t_f,\eps) = \frac{\bar{A}_{22}(\psi(t_f))}{c^I(\psi(t_f))}.
\end{array}
\right.
\end{equation*}
Thanks to Assumption~\ref{ass:cI-bound-eps-uniform-bound-on-delta}, the rescaling is well defined. At this point, the dependance in $\eps$ has been made explicit to stress that the singular perturbation approximation corresponds to set $\eps = 0$ (see below equation~\eqref{eq:dot-zf}).

Without entering in details, writing the  system~\eqref{eq:y-z-dynamics} in the slow time-scale leads to non-standard singularly perturbed system
\sam{because of the appearance of the term $1/\eps$ inside the matrix arguments. Thus singular perturbation analysis in~\cite{KhalilBook} cannot be applied}.
Nevertheless, using averaging theory \cite{Balachandra,Teel} for singularly perturbed systems we can approximate the behavior of system \eqref{eq:y-z-dynamics} using the fast-time scale dynamics.
\begin{lemma}\label{l:y-z-dynamics-in-fast-time-scale}
In the fast time-scale system~\eqref{eq:y-z-dynamics} is described by
\begin{equation}\label{eq:y-z-fast-dynamics-rescaled-2}
\left\{
\begin{split}
\frac{d\hat{y}}{dt_f}(t_f) &= \eps A_{11}(t_f,\eps)\hat{y}(t_f)+\eps A_{12}(t_f,\eps)\hat{z}(t_f),\\
\frac{d\hat{z}}{dt_f}(t_f)&= \eps A_{21}(t_f,\eps)\hat{y}(t_f)+A_{22}(t_f,\eps)\hat{z}(t_f),
\end{split}\right.
\end{equation} 
where the $\infty$-norms of the matrices in \eqref{eq:y-z-fast-dynamics-rescaled} satisfy
\begin{equation}\label{eq:norm-Aij-bound}
\begin{array}{lll}
\|A_{11}(t_f,\eps)\|_\infty\leq2,\quad\;
\|A_{12}(t_f,\eps)\|_\infty\leq2, \\
\|A_{21}(t_f,\eps)\|_\infty\leq2,\quad\;
\|A_{22}(t_f,\eps)\|_\infty\geq(1-8\eps).
\end{array}
\end{equation}
\end{lemma}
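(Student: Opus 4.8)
The plan is to split the argument into two essentially independent parts: deriving the rescaled dynamics~\eqref{eq:y-z-fast-dynamics-rescaled-2} from system~\eqref{eq:y-z-dynamics} via the chain rule, and then reading off the norm estimates~\eqref{eq:norm-Aij-bound} directly from Lemma~\ref{l:orders}. Neither part is hard; the result is a bookkeeping step that packages the already-established matrix bounds into the singularly perturbed form used by Theorem~\ref{th:main}.

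For the first part, I would begin with the time change $t_f = \int_0^t c^I(s)\,ds$ and observe that, since $c^I(t) \ge \uc > 0$ under Assumption~\ref{ass:intra-reciprocity}, the map $t \mapsto t_f$ is strictly increasing, continuously differentiable and divergent, so its inverse $t = \psi(t_f)$ is well defined and differentiable on $\R^+$ with $\frac{d\psi}{dt_f}(t_f) = 1/c^I(\psi(t_f))$. Applying the chain rule to $\hat{y}(t_f) = y(\psi(t_f))$ and $\hat{z}(t_f) = z(\psi(t_f))$ gives $\frac{d\hat{y}}{dt_f}(t_f) = \frac{1}{c^I(\psi(t_f))}\dot{y}(\psi(t_f))$ and the analogous identity for $\hat{z}$. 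Substituting system~\eqref{eq:y-z-dynamics} evaluated at $t = \psi(t_f)$ and then inserting the definitions of the rescaled matrices $A_{ij}(t_f,\eps)$ recovers exactly~\eqref{eq:y-z-fast-dynamics-rescaled-2}: writing $\bar{A}_{11}(\psi(t_f)) = c^I(\psi(t_f))\,\eps\,A_{11}(t_f,\eps)$ (and likewise for $A_{12}, A_{21}$) makes the factor $c^I(\psi(t_f))$ cancel while the extra $\eps$ in the normalization reappears as the explicit multiplicative $\eps$ in front of those three terms; since $A_{22}$ carries no $\eps$ in its normalization, the corresponding term in the $\hat{z}$ equation appears without an $\eps$ factor.

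For the norm estimates, I would divide each inequality of Lemma~\ref{l:orders}, evaluated at $t = \psi(t_f)$, by the appropriate normalization. Dividing $\|\bar{A}_{11}(\psi(t_f))\|_\infty \le 2\,c^I(\psi(t_f))\,\eps$ by $c^I(\psi(t_f))\,\eps$ yields $\|A_{11}(t_f,\eps)\|_\infty \le 2$, and the identical computation gives the bounds for $A_{12}$ and $A_{21}$. For the last estimate, dividing $\|\bar{A}_{22}(\psi(t_f))\|_\infty \ge (1-8\eps)\,c^I(\psi(t_f))$ by $c^I(\psi(t_f))$ gives $\|A_{22}(t_f,\eps)\|_\infty \ge 1-8\eps$. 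These divisions are licit because $c^I(\psi(t_f)) \ge \uc > 0$ by Assumption~\ref{ass:intra-reciprocity} and $\eps > 0$ by Assumption~\ref{ass:cI-bound-eps-uniform-bound-on-delta}, and the resulting bounds are uniform in $t_f$ precisely because the bounds of Lemma~\ref{l:orders} are uniform in $t$.

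The only point deserving genuine attention is the first one, namely verifying that $t \mapsto t_f$ is a diffeomorphism of $\R^+$ onto itself; this is exactly what guarantees that the rescaled system is defined for all $t_f \in \R^+$ and that the estimates hold uniformly. Once this reparametrization is in place, everything else is a direct substitution, so I do not anticipate any substantive obstacle beyond careful tracking of the normalization factors.
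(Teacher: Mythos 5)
Your proposal is correct and follows essentially the same route as the paper: the chain rule applied to $\hat{y}=y\circ\psi$ and $\hat{z}=z\circ\psi$ with $\frac{d\psi}{dt_f}=1/c^I(\psi(t_f))$, followed by substituting the definitions of $A_{ij}(t_f,\eps)$ and reading the norm bounds off Lemma~\ref{l:orders}. Your extra care about the time change being a well-defined invertible reparametrization (via $c^I(t)\ge\uc>0$) is handled in the paper in the discussion preceding the lemma rather than in its proof, but the substance is identical.
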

\vspace{0.2cm}
\begin{proof}
Using equation~\eqref{eq:y-z-dynamics}, 
\begin{eqnarray*}
\frac{d\hat{y}}{dt_f}(t_f) &=& \frac{d(y \circ \psi)}{dt_f}(t_f)
= \frac{d\psi}{dt_f}(t_f) \frac{dy}{dt}(\psi(t_f)) \\
&=& \frac{ (\bar{A}_{11}(\psi(t_f))y(\psi(t_f))+\bar{A}_{12}(\psi(t_f))z(\psi(t_f)))}{c^I(\psi(t_f))}.
\end{eqnarray*}
Moreover,
$$\bar{A}_{11}(\psi(t_f)) / c^I(\psi(t_f)) = \eps A_{11}(t_f,\eps),$$
and analogous equalities hold for $\bar{A}_{12}$ and $\bar{A}_{22}$ while
$$\bar{A}_{22}(\psi(t_f)) / c^I(\psi(t_f)) = A_{22}(t_f,\eps).$$
The bounds on the $\infty$-norms directly come from Lemma~\ref{l:orders}.
\end{proof}

\section{Singular perturbation analysis}\label{sec:main}

The dynamics given in Lemma~\ref{l:y-z-dynamics-in-fast-time-scale} fits in the generic form of systems treated in \cite{Balachandra,Teel}.
Let $z_f$ be the solution of 
\begin{equation}\label{eq:dot-zf}
\frac{dz_f}{dt_f}(t_f) = A_{22}\left(t_f,0\right) z_f(t_f).
\end{equation}
Note that for $\eps=0$ the matrix $A_{22}$ and corresponds to intra clusters communication only and $z_f(t_f) = Q\tilde{x}(\psi(t_f))$. So, as given in Lemma~\ref{l:LE0-implies-doty-0}, system~\eqref{eq:dot-zf} converges to $0$ exponentially fast. 
Denote $A_{11}(s,0) = \lim_{\eps \rightarrow 0} A_{11}(s,\eps)$, which exists thanks to equation~\ref{eq:norm-Aij-bound}.

Let $A_{av}$ be defined in Assumption \ref{ass:averaging} and $y_s$ be the solution of 
\begin{equation}\label{eq:dot-ys}
\frac{dy_s}{dt_s}(t_s) = A_{av} y_s(t_s).
\end{equation}
\begin{proposition}\label{prop:consensus-ys}
Denote $y_{s,k}$ for $k\in \MM$ the $k$-th component of $y_s$. Under Assumption~\ref{ass:averaging},
system~\eqref{eq:dot-ys} is a consensus system described element-wise by
\begin{equation}\label{eq:dot-ys-component-wise}
\frac{dy_{s,k}}{dt_s}(t_s) = \ssum_{j\in \MM} a^s_{kl} (y_{s,l}(t_s)-y_{s,k}(t_s)), k\in \MM,
\end{equation}
where $a^s_{kl} = \lim_{T\rightarrow\infty}\frac{1}{T}\int_{t_f}^{t_f+T} \ssum_{j \in \C_l}\ssum_{i \in \C_k}  \frac{q_i^k(\psi(s)) a_{ij}(\psi(s))}{c^I(\psi(s)) \eps}$ is independent of $t_f$.
Moreover, under Assumption \ref{ass:inter-reciprocity} (Inter-cluster communication) $y_s$ reaches consensus exponentially fast.
\end{proposition}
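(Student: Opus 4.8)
The plan is to pin down the structure of the slow matrix $A_{av}$ by computing $\bar A_{11}(t)=-J(t)L(t)H$ explicitly, then to read off the element-wise consensus form, and finally to deduce exponential convergence from strong connectivity of the averaged inter-cluster graph.

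First I would compute $\bar A_{11}$ entrywise. Since $L=L^I+L^E$ and each diagonal block of $L^I$ has zero row sums, one has $L(t)H=L^E(t)H$, so $\bar A_{11}(t)=-J(t)L^E(t)H$. Using the block structure of $J$ (whose $k$-th block row is $(q^k)^\top$ supported on $\C_k$) and of $H$, together with $L^E_{ij}(t)=0$ whenever $i\neq j$ lie in the same cluster and $L^E_{ii}(t)=\sum_{h\notin\C_k}a_{ih}(t)$ for $i\in\C_k$, I would obtain for $k\neq l$ the off-diagonal entry
\[
(\bar A_{11}(t))_{kl}=\sum_{i\in\C_k}\sum_{j\in\C_l}q_i^k(t)\,a_{ij}(t)\ \ge\ 0 ,
\]
while the diagonal entry equals $-\sum_{i\in\C_k}q_i^k(t)\sum_{h\notin\C_k}a_{ih}(t)$, i.e.\ exactly minus the sum of the off-diagonal entries of that row. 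Hence $\bar A_{11}(t)$ has nonnegative off-diagonal entries and zero row sums, so it is the negative of a weighted Laplacian.

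These two properties are preserved under multiplication by the positive scalar $1/(c^I(\psi(t_f))\eps)$, under the pointwise limit $\eps\to0$, and under the Cesàro time-average. Therefore $A_{11}(t_f,0)$, and then $A_{av}$ (which exists and is constant by Assumption~\ref{ass:averaging}), inherit both properties, so $A_{av}$ is again the negative of a Laplacian. Setting $a^s_{kl}:=(A_{av})_{kl}$ for $k\neq l$, the limit/average of the displayed off-diagonal entries is precisely the stated expression, and the zero-row-sum property $(A_{av})_{kk}=-\sum_{l\neq k}a^s_{kl}$ turns $\dot y_s=A_{av}y_s$ into the element-wise consensus form~\eqref{eq:dot-ys-component-wise}, proving the first claim. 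For the second claim I would show the constant averaged graph on $\MM$ with weights $a^s_{kl}$ is strongly connected and then invoke the standard fact that $\dot y_s=A_{av}y_s$ with an irreducible negative Laplacian reaches consensus exponentially (simple zero eigenvalue, all other eigenvalues with strictly negative real part). To get strong connectivity, fix any non-trivial $S\sqsubset\MM$ and bound the averaged weight crossing the cut from below using $q_i^k(t)\ge q_{min}>0$ (Lemma~\ref{l:invariant_qk_and_bounds}) and the persistence condition~\eqref{eq:reciprocity-external-weights}:
\[
\sum_{k\in S}\sum_{l\in\MM\setminus S}a^s_{kl}\ \ge\ q_{min}\lim_{T\to\infty}\frac1T\int_{t_f}^{t_f+T}\frac{1}{c^I(\psi(s))\eps}\sum_{k\in S}\sum_{l\in\MM\setminus S}\w{\C_k}{\C_l}(\psi(s))\,ds\ \ge\ q_{min}\,\ua\ >\ 0 .
\]
Since every cut thus carries a positive outgoing averaged weight, the graph must be strongly connected: were some node $v$ unreachable from some node $u$, the reachable set of $u$ would be a non-trivial subset with no outgoing edge, contradicting the bound; exponential consensus of $y_s$ follows.

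The main obstacle is the bookkeeping in the first step and, above all, the bridge in the last step: relating the cut condition~\eqref{eq:reciprocity-external-weights}, stated for the raw inter-cluster weights $\w{\C_k}{\C_l}$, to the aggregate weights $a^s_{kl}$, which carry the factors $q_i^k$ and are time-averaged. The uniform lower bound $q_{min}$ from Lemma~\ref{l:invariant_qk_and_bounds} is exactly what makes this bridge work. One must also check that the limit defining $A_{av}$ and the time-average defining $a^s_{kl}$ refer to the same object (guaranteed by Assumption~\ref{ass:averaging}) and that the $\eps\to0$ limit is compatible with the order of the external weights imposed by Assumption~\ref{ass:cI-bound-eps-uniform-bound-on-delta}; the remaining consensus conclusion is then standard spectral theory for irreducible Laplacians.
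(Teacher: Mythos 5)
Your proof is correct, and the first half coincides with the paper's own argument: both compute $\bar{A}_{11}(t)=-J(t)L^E(t)H$ (you entrywise, the paper by evaluating $-\bar{A}_{11}(t)y$ componentwise), identify the off-diagonal entries as $\sum_{i\in\C_k}\sum_{j\in\C_l}q_i^k(t)a_{ij}(t)\ge 0$ with zero row sums, and pass these Laplacian properties through the rescaling and the Ces\`aro average to get~\eqref{eq:dot-ys-component-wise}. Where you genuinely diverge is the exponential-consensus step. The paper verifies that the weights $a^s_{kl}$ satisfy cut-balance (via~\eqref{eq:persistence-external-weights}) and persistence of a strongly connected graph (via~\eqref{eq:reciprocity-external-weights} and $q_i^k\ge q_{min}$), and then invokes Proposition~4 of the cited persistence/cut-balance result, i.e.\ the same machinery used for Lemma~\ref{l:convergence-exponential-rate}. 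You instead exploit the fact that the averaged system is \emph{time-invariant}: you lower-bound every cut by $q_{min}\ua>0$ using only~\eqref{eq:reciprocity-external-weights} and Lemma~\ref{l:invariant_qk_and_bounds}, conclude strong connectivity of the constant averaged graph, and finish with standard spectral theory for an irreducible Laplacian (simple zero eigenvalue, all others in the open left half-plane). Your route is more elementary and does not need the reciprocity condition~\eqref{eq:persistence-external-weights} at all for this particular claim, since cut-balance is only required to handle time-varying weights; the paper's route buys uniformity with the rest of its toolbox but is slightly heavier than necessary here. Both arguments are sound.
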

\begin{proof}
To show that system \eqref{eq:dot-ys} is a consensus system, it suffices to show that system $\dot{y} = A_{11}(t_f,\eps) y$ is a consensus system. Recall that $A_{11}(t_f,\eps) = \frac{\bar{A}_{11}(t)}{c^I(t) \eps}$ where recall that $t = \psi(t_f)$. Let $y \in \R^m$. Since $-\bar{A}_{11}(t) y = J(t)L(t)H y$, we first compute $(L(t)H y)_i$ for $i \in \R^n$. Let $k \in \MM $ such that $i \in \C_k$, we obtain
\begin{eqnarray*}
(L(t)H y)_i = \ssum_{l=1}^m \ssum_{j \in \C_l} L_{ij}(t) (Hy)_j,
\end{eqnarray*}
and for $j \in \C_l$,
$
(Hy)_j = \sum_{f=1}^m H_{jf}(t) y_f = y_l,
$
so that,
\begin{eqnarray*}
&&(L(t)H y)_i = \ssum_{l=1}^m \ssum_{j \in \C_l} L_{ij}(t) y_l\\[-2mm]
&&= \big( \ssum_{j \in \C_k} L_{ij}(t) \big) y_k +
\ssum_{\substack{l=1\\l\neq k}}^m \big( \ssum_{j \in \C_l} L_{ij}(t) \big) y_l\\[-2mm]
&&= - \ssum_{\substack{j \in \C_k \\ j \neq i} } a_{ij}(t) y_k
+ L_{ii}(t) y_k   
-\ssum_{\substack{l=1\\l\neq k}}^m  \ssum_{j \in \C_l} a_{ij}(t)  y_l,
\end{eqnarray*}
and since $L_{ii}(t) = \sum_{j \in \NN} a_{ij}(t)$, the previous equation simplifies to
\begin{equation*}
(L(t)H y)_i = \ssum_{\substack{l=1\\l\neq k}}^m \big( \ssum_{j \in \C_l} a_{ij}(t) \big) \left(y_k - y_l\right). 
\end{equation*}
Coming back to the computation of $-\bar{A}_{11}(t) y$, for $k \in \MM $, we have
\begin{eqnarray*}
(-\bar{A}_{11}(t) y)_k \hspace{-0.3cm}&=\hspace{-0.3cm}& \hspace{-0.3cm}\ssum_{i \in \NN} J_{ki}(t) (L(t)Hy)i= \ssum_{i \in \C_k} q_i^k(t) (L(t)Hy)i \\ [-2mm] 
&=& \ssum_{i \in \C_k} q_i^k(t) \ssum_{\substack{l=1\\l\neq k}}^m \big( \ssum_{j \in \C_l} a_{ij}(t) \big) \left(y_k - y_l\right) \\[-2mm] 
&=& \ssum_{\substack{l=1\\l\neq k}}^m \big( \ssum_{i \in \C_k} q_i^k(t) \ssum_{j \in \C_l} a_{ij}(t) \big) \left(y_k - y_l\right) \\[-3mm]
&=& c^I(t) \eps\ssum_{\substack{l=1\\l\neq k}}^m  a^s_{kl}(t_f) \left(y_k - y_l\right).
\end{eqnarray*}
Dividing by $c^I(t) \eps$ provides the first part of the result. To obtain the exponential convergence to consensus, we use Assumption~\ref{ass:inter-reciprocity} and the fact that $q_i^k(t) \in [q_{min},q_{max}]$ (see Lemma~\ref{l:invariant_qk_and_bounds}) to show that weights $a^s_{kl}$ satisfy an assumption analogous to Assumption~\ref{ass:intra-reciprocity} so that Proposition 4 in~\cite{SamAntoine_Persistent_SICON2013} applies to system~\eqref{eq:dot-ys}, as in Lemma~\ref{l:convergence-exponential-rate}. In particular, equations~\eqref{eq:persistence-external-weights} and~\eqref{eq:reciprocity-external-weights} allow showing 
that the persistent graph corresponding to weights $a^s_{kl}$ is strongly connected and that these weights satisfy the cut-balance property, respectively.
\end{proof}

We are now ready to prove the main result of the study.
{\it Proof of Theorem \ref{th:main}:}
We want to apply Theorem 1 in \cite{Balachandra}. Notice that system~\eqref{eq:y-z-fast-dynamics-rescaled} rewrites as
\vspace{-0.2cm}
\begin{equation}\label{eq:y-z-fast-dynamics-rescaled_Bala}
\left\{
\begin{split}
\frac{d\hat{y}}{dt_f}(t_f) &= \eps f(t_f,t_s,\hat{y},\hat{z},\eps),\\
\frac{d\hat{z}}{dt_f}(t_f)&= g(t_f,t_s,\hat{y},\hat{z},\eps),
\end{split}\right.
\end{equation} 
where, following the notation in \cite{Balachandra},
\begin{equation}
 \begin{array}{l}
f(t_f,t_s,\hat{y},\hat{z},\eps) = A_{11}(t_f,\eps)\hat{y}(t_f)+ A_{12}(t_f,\eps)\hat{z}(t_f),\\
g(t_f,t_s,\hat{y},\hat{z},\eps) = \eps A_{21}(t_f,\eps)\hat{y}(t_f)+A_{22}(t_f,\eps)\hat{z}(t_f)
 \end{array}
\end{equation}
It is worth noting that in our case $f$ and $g$ do not depend explicitly on the slow time $t_s$.
Equation $\dot{z}_f = g(t_f,\tau,y,z_f,0)$ where $\tau$ and $y$ are regarded as parameters corresponds to equation~\eqref{eq:dot-zf}. According to Lemma~\ref{l:LE0-implies-doty-0}, $z_f$ converges at exponential speed to $0$ so that Hypothesis H1 in~\cite{Balachandra} is satisfied. Moreover, $f_0$ in H2 in~\cite{Balachandra} is
\begin{eqnarray*}
f_0(\tau,y) &=& \lim_{T \rightarrow \infty} \frac{1}{T} \int_{t}^{t+T} f(s,\tau,y,z_f(s),0)ds\\
&=& \lim_{T \rightarrow \infty} \frac{1}{T} \int_{t}^{t+T} A_{11}(s,0)y+ A_{12}(s,0)z_f(s) ds\\
&=& \lim_{T \rightarrow \infty} \frac{1}{T} \int_{t}^{t+T} A_{11}(s,0)ds \cdot y
= A_{av} \cdot y
\end{eqnarray*}
where we used the exponential decrease of $z_f$ and the bound on $\|A_{12}(s,0)\|_{\infty}$ in equation~\eqref{eq:norm-Aij-bound} to remove the second term under the integral. Assumption~\ref{ass:averaging} shows that H2 and equation~(2.4) in~\cite{Balachandra} are satisfied. The averaged equation (2.5) in~\cite{Balachandra} corresponds to equation~\eqref{eq:dot-ys}, whose trajectory are bounded and converge to consensus at exponential speed thanks to Proposition~\ref{prop:consensus-ys}. As a consequence, Hypothesis H3 and H4 in~\cite{Balachandra} are satisfied. Finally, H5 is also satisfied because $g(t_f,\tau,y,z_f,0)$ is independent of $y$. Under Hypotheses H1 through H5, Theorem 1 in~\cite{Balachandra} provides the expected result.

\vspace{-0.1cm}
\section{Numerical illustrations}\label{sec:numerical-illustration}
\vspace{-0.1cm}
Consider the multi-agents system~\eqref{eq:def_sys_derivative} with $8$ agents and two clusters whose communication pattern is defined in Figure~\ref{fig:interaction-graph}. When not null, internal communication weights are set to $a_{ij}(t) = 2+\cos(2t)$ in cluster $1$, to $a_{ij}(t) = 1$ in cluster $2$ and external communication weights are set to $a_{ij}(t) = \eps (sin(t)+2)/3$ where $\eps$ is a constant parameter. As a consequence, Assumption~\ref{ass:intra-reciprocity},~\ref{ass:inter-reciprocity},~\ref{ass:cI-bound-eps-uniform-bound-on-delta} and~\ref{ass:averaging} are all satisfied. In the present case, internal weights vary in time all with the same multiplicative factor so that $q^k(t)$ is time-invariant and correspond to the left eigenvector of $L^k(0)$ associated to eigenvalue $0$ such that $q^k \un_{n_k} = 1$ \sam{(see Remark~\ref{r:computing-qk} for the computation of $q^k$ in the general case)}. We simulate the system with initial conditions $x(0) = (6,6.3,4.4,5.2,3,3.5,0.4,2.2)^T$ (see Figure~\ref{fig:simu}). As expected \sam{from Proposition~\ref{prop:consensus-ys} and Lemma~\ref{l:LE0-implies-doty-0}, all aggregate variables converge to consensus and disagreement variables converge to $0$, even when $\eps$ is not small}. \sam{When $\eps=1$ the approximations are not precise, but they approch the real variables uniformly as $\eps$ diminishes, as expected from Theorem~\ref{th:main}.}

\begin{figure}[!ht]
\begin{center}
\includegraphics[scale=0.4,clip = true, trim=4.5cm 6.4cm 5cm 5.9cm,keepaspectratio]{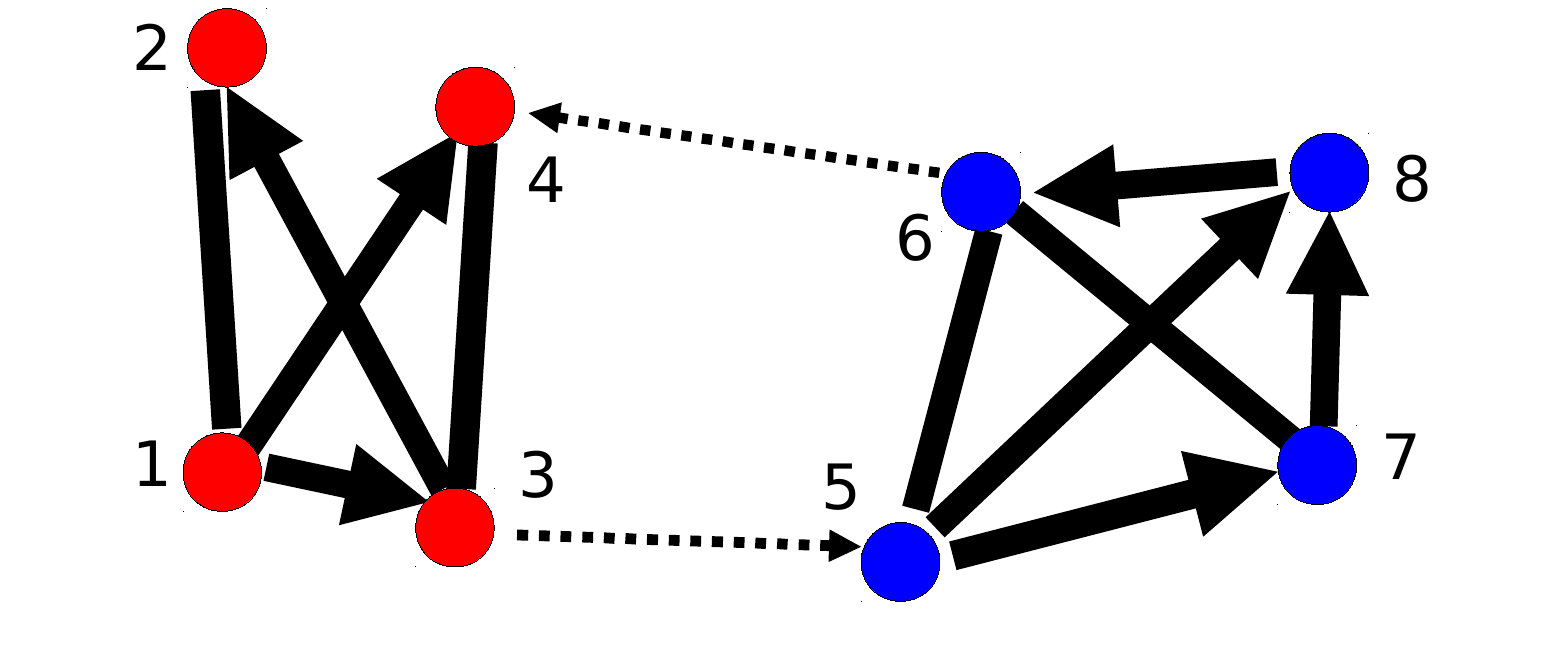}
\vspace{-0.3cm}
\caption{Communication graph. Red and blue circles are agents in cluster $1$ and $2$ respectively. Bold and dashed thin arrows correspond to communication inside and between clusters, respectively. Arrows are replaced by lines when the communication is bidirectional. Only communications with non-zero weight are displayed.}\label{fig:interaction-graph}
\end{center}
\end{figure}

\begin{figure}[!ht]
\begin{center}
\includegraphics[scale=0.3,clip = true, trim=4.5cm 8cm 4cm 8cm,keepaspectratio]{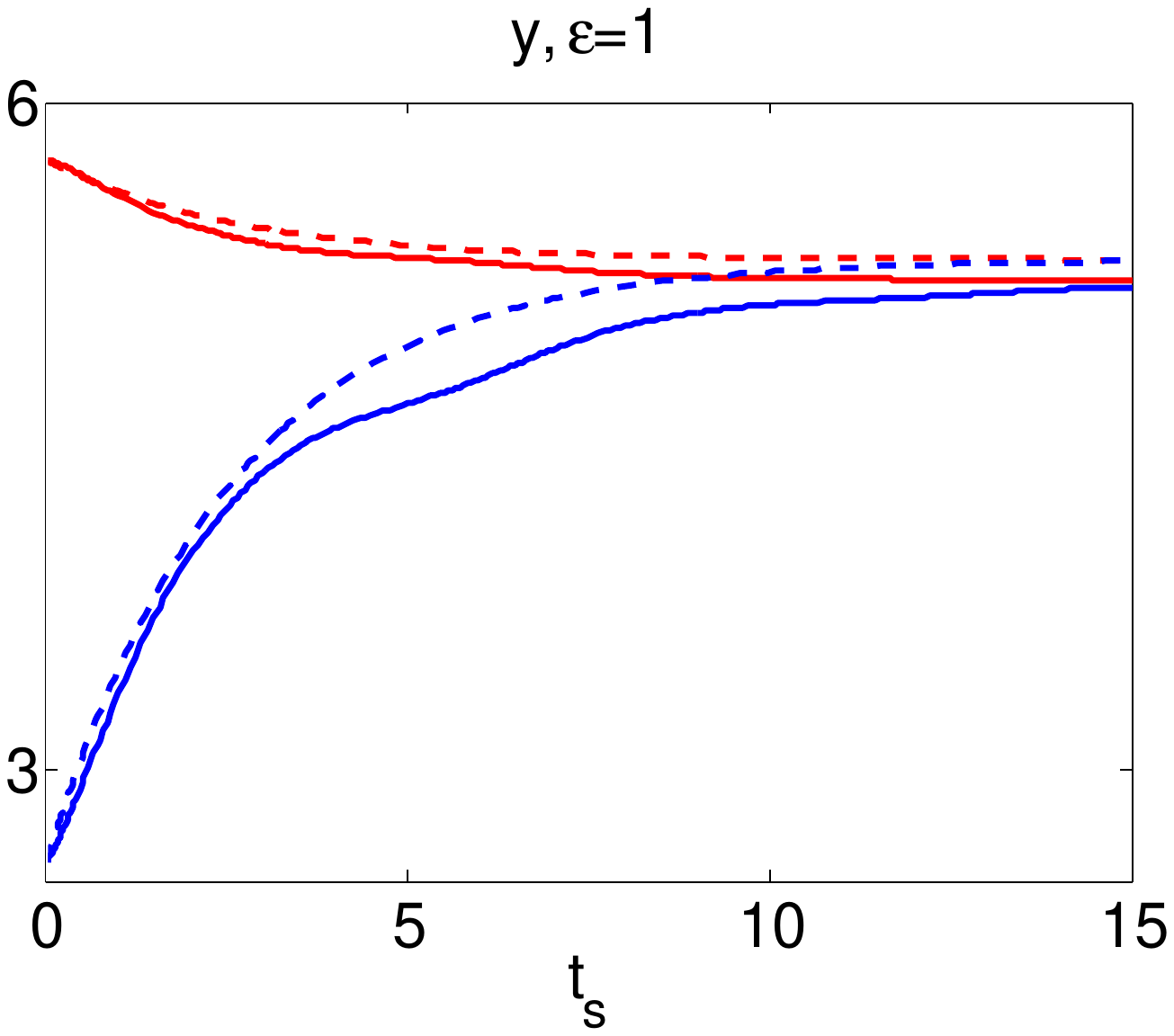}
\includegraphics[scale=0.3,clip = true, trim=4cm 8cm 4cm 8cm,keepaspectratio]{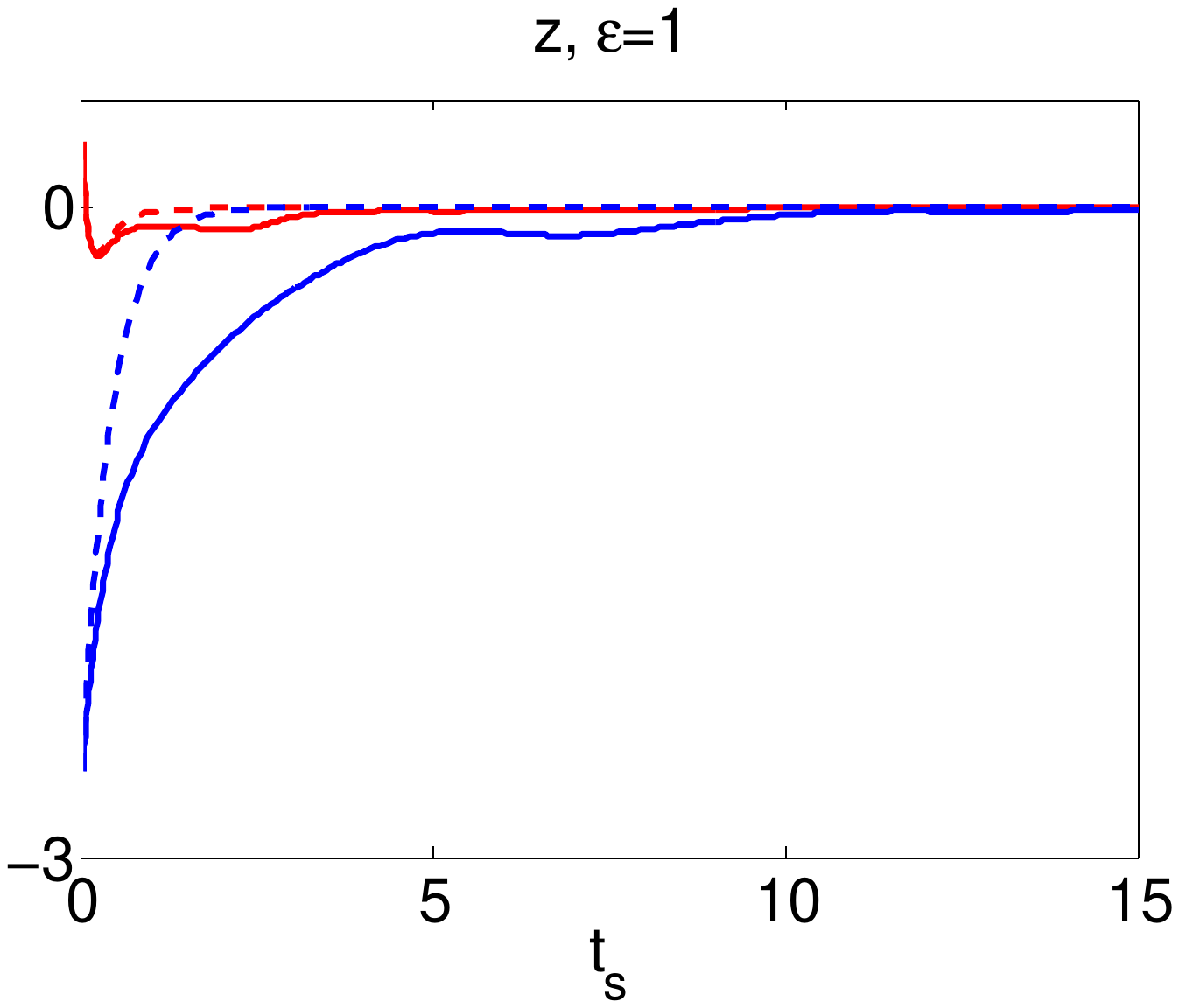}
\includegraphics[scale=0.3,clip = true, trim=4.5cm 8cm 4cm 8cm,keepaspectratio]{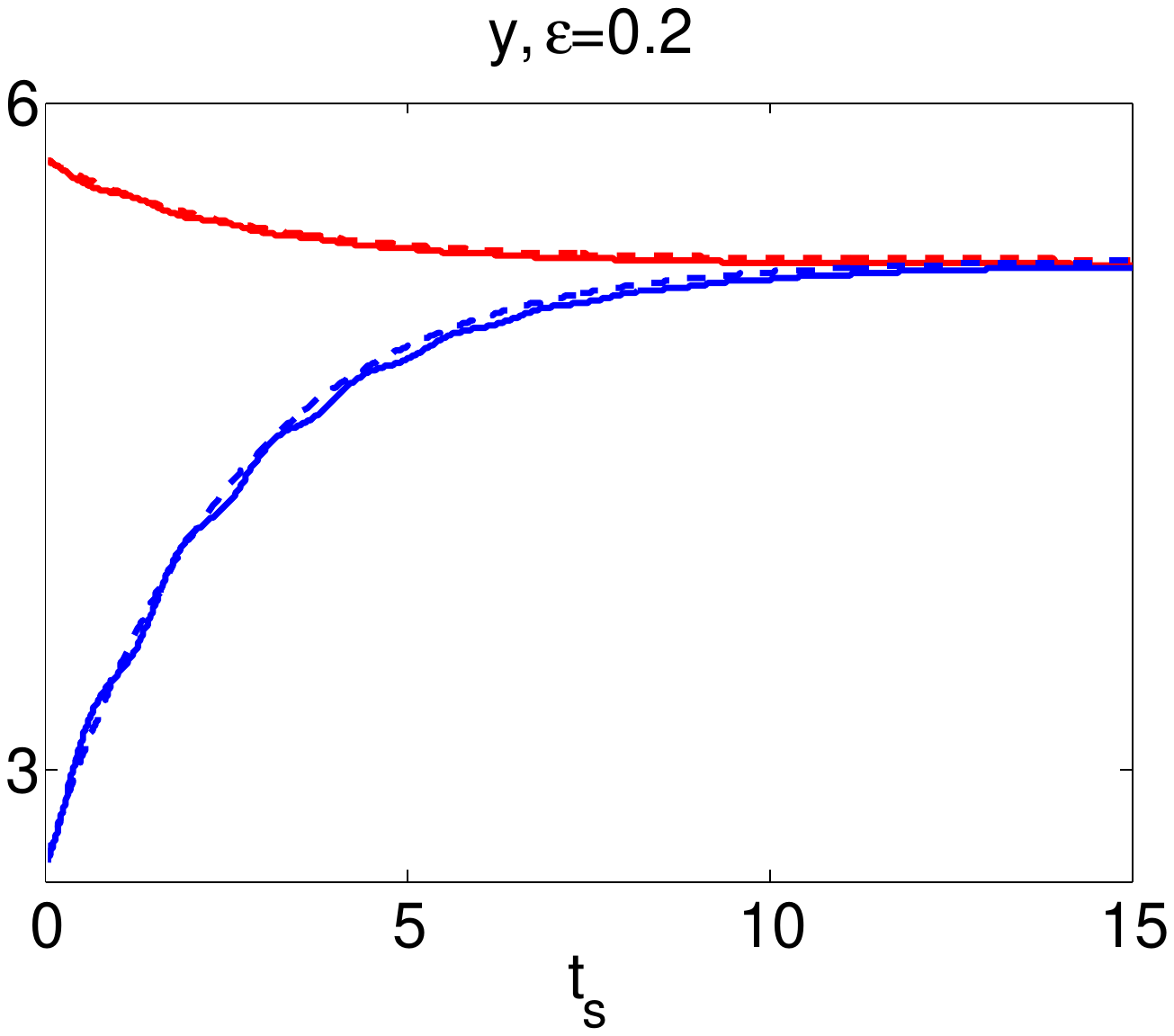}
\includegraphics[scale=0.3,clip = true, trim=4cm 8cm 4cm 8cm,keepaspectratio]{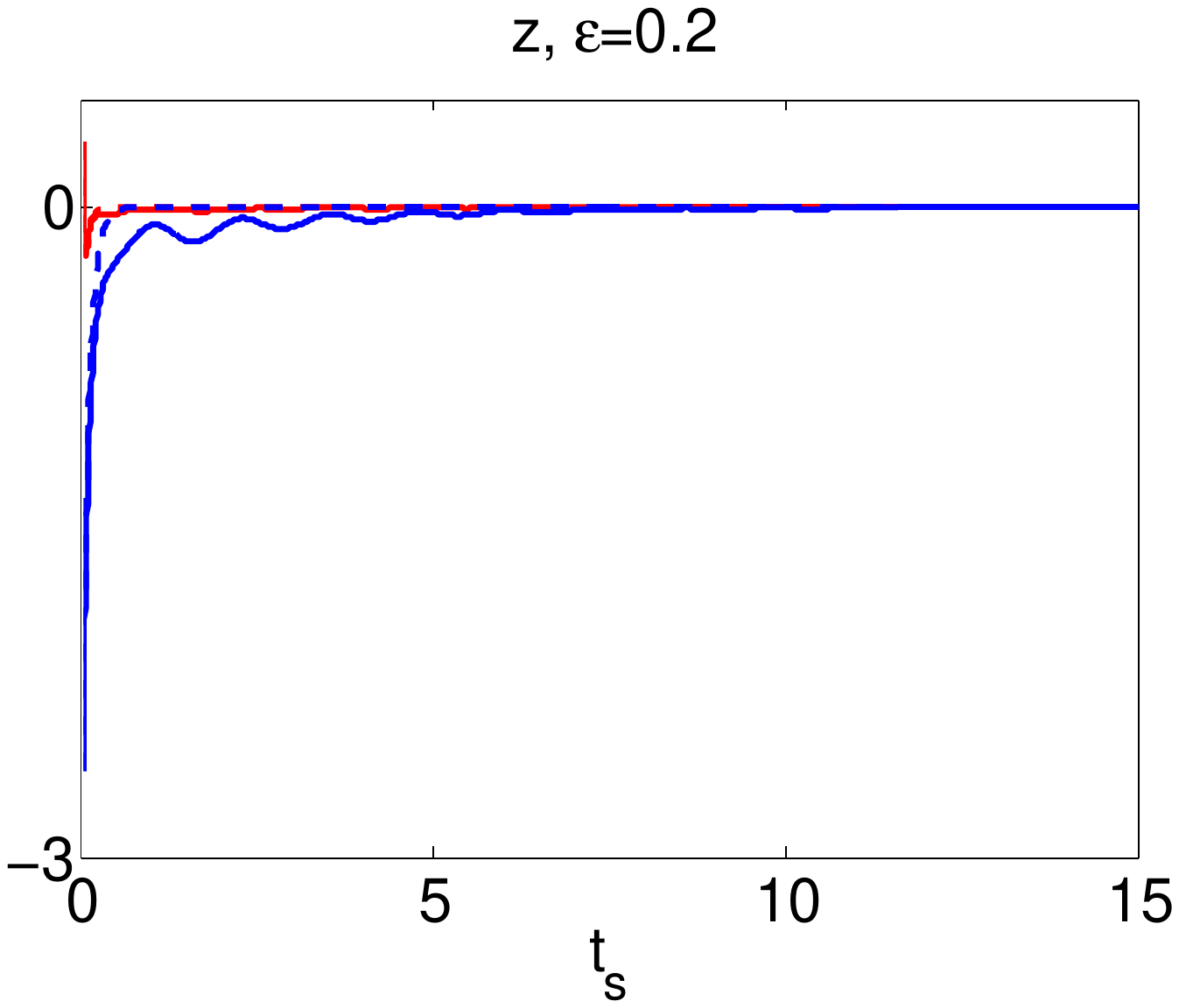}
\includegraphics[scale=0.3,clip = true, trim=4.5cm 8cm 4cm 8cm,keepaspectratio]{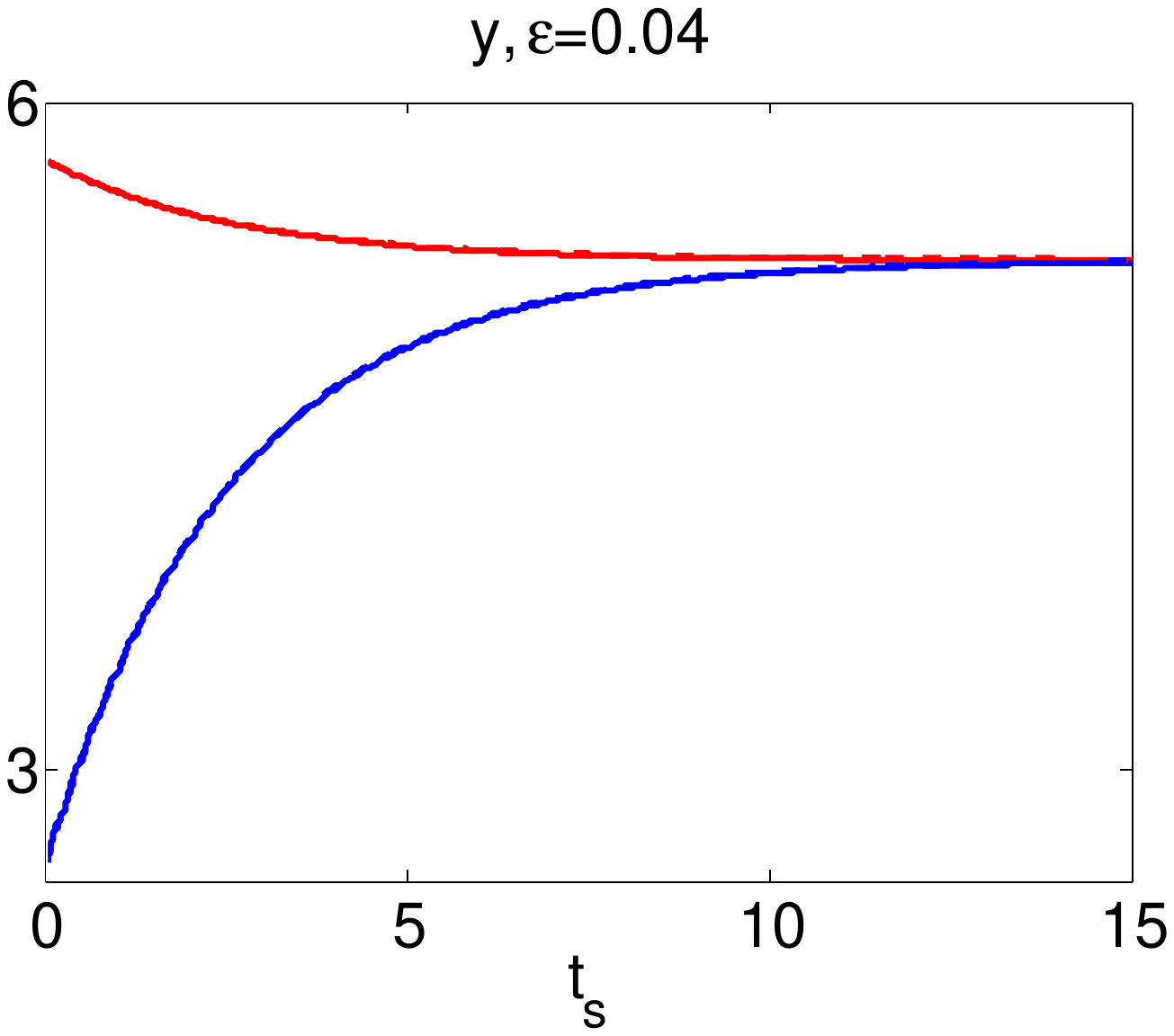}
\includegraphics[scale=0.3,clip = true, trim=4cm 8cm 4cm 8cm,keepaspectratio]{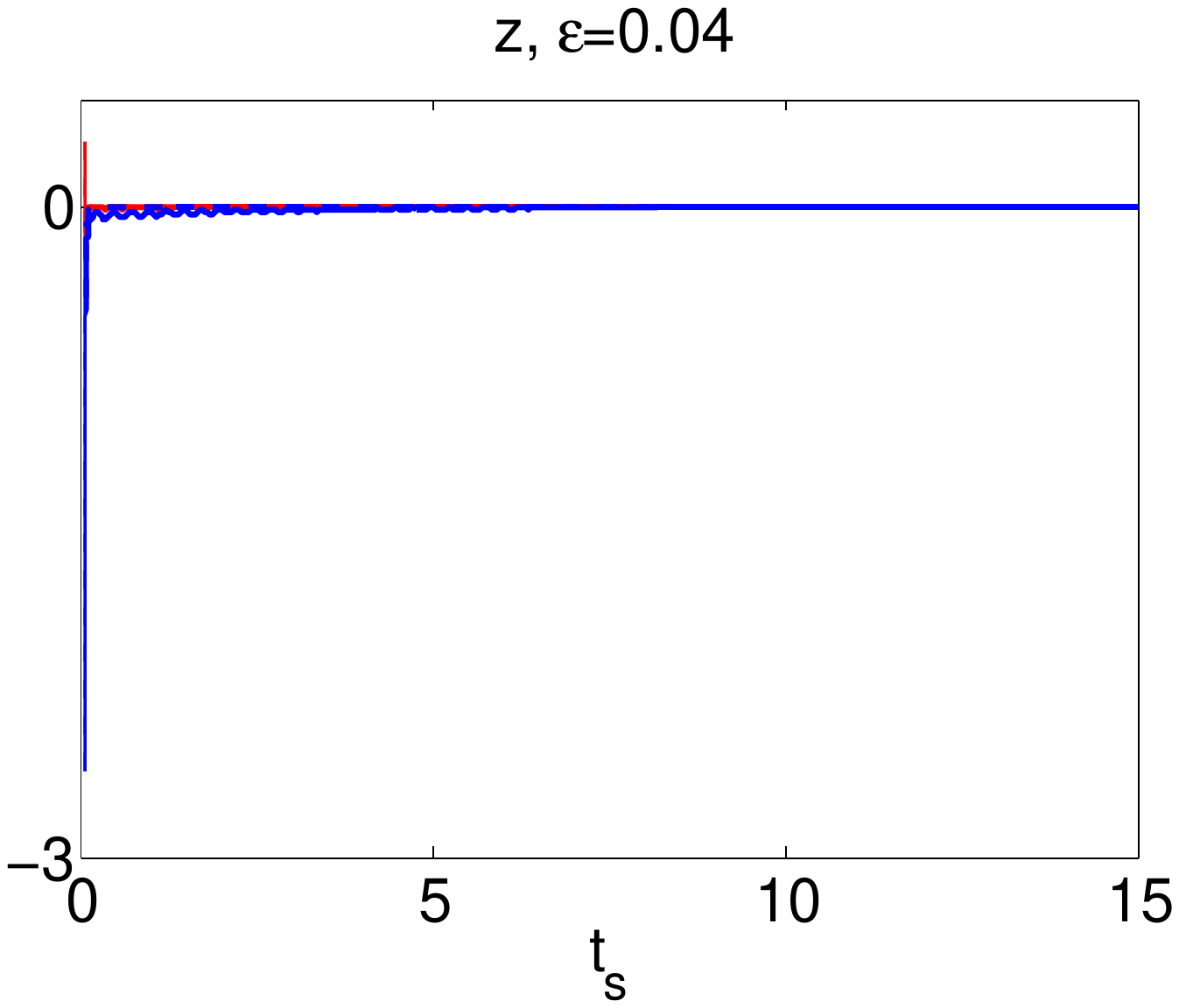}
\vspace{-0.3cm}
\caption{Trajectories in the slow time scale $t_s$ of the aggregate (left) and disagreement variables (right) along with their approximations, for $\eps \in \{1,0.2, 0.04\}$ from top to bottom. Strait lines correspond to the original variables while dashed lines are the approximations. Red lines are for cluster $1$ and blue lines for cluster $2$. Only the first disagreement variable of each cluster is displayed (the two others present a similar behavior).}\label{fig:simu}
\end{center}
\end{figure}
\vspace{-0.2cm}
\section{Conclusion}

In the present study, we have extended the singular perturbation analysis for linear consensus systems found in~\cite{ChowKokotovic} to time-varying and directed networks. We have emphasized that the right tool to study the network at hand is averaging theory for two time-scales systems. 

\appendix

In the Appendix, we provide intermediate results: Lemma~\ref{l:Phi_bound_eta}, Lemma \ref{l:bound-Phi_S_b} and Corollary~\ref{cor:bound_phi_eta} which allow to obtain Lemma~\ref{l:invariant_qk_and_bounds}. We also provide the proof of Lemma~\ref{l:orders}.
Let $t\ge 0$ and $S\sqsubset \C_k$. Denote 
$$\tau_S(t) = \min\left\{ \tau\ge t | \int_{t}^\tau \sum_{i \in \C_k\setminus S, }\sum_{j\in S} a_{ij}(s)ds = 1 \right\}.$$
\begin{lemma}\label{l:Phi_bound_eta}
Let $t\ge 0$. For all $S\sqsubset \C_k$, there exists $i \in \C_k \setminus S$ such that for all $j \in S \cup \{i\}$,
$$
\ssum_{h \in S} \Phi_{jh}(t,\tau_S(t)) \ge \eta,
$$
where $\eta = \exp(-K_I) / n_k$.
\end{lemma}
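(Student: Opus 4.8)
The plan is to reduce the statement to a one-dimensional tracking problem for the column vector $p(\tau):=\Phi^k(t,\tau)\mathbf{1}_S$, whose $j$-th entry is exactly the quantity to be bounded, $p_j(\tau)=\sum_{h\in S}\Phi^k_{jh}(t,\tau)$. Differentiating~\eqref{eq:fundamental-matrix} in $\tau$ shows that $p$ solves the isolated consensus dynamics $\dot p(\tau)=-L^k(\tau)p(\tau)$ with initial condition $p_j(t)=\mathbf{1}[j\in S]$, and~\eqref{eq:Phi-sum-to-one} together with the non-negativity of $\Phi^k$ gives $p_j(\tau)\in[0,1]$ for all $\tau\ge t$. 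In these terms the claim reads: at $\tau=\tau_S(t)$, every coordinate of $p$ indexed by $S$, and at least one coordinate indexed by $\C_k\setminus S$, is at least $\eta=\exp(-K_I)/n_k$.

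First I would establish the bound on $S$, which is the clean part. Let $m(\tau)=\min_{j\in S}p_j(\tau)$ and let $j^\star$ attain the minimum. In $\dot p_{j^\star}=\sum_{l\neq j^\star}a_{j^\star l}(p_l-p_{j^\star})$ the terms with $l\in S$ are non-negative (there $p_l\ge m=p_{j^\star}$), while the terms with $l\in\C_k\setminus S$ are bounded below by $-m\sum_{l\in\C_k\setminus S}a_{j^\star l}\ge -m\,\w{S}{(\C_k\setminus S)}(\tau)$. Hence $\dot m\ge-\w{S}{(\C_k\setminus S)}(\tau)\,m$, and Gr\"onwall gives $m(\tau_S(t))\ge\exp\big(-\int_t^{\tau_S(t)}\w{S}{(\C_k\setminus S)}(s)\,ds\big)$. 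The intra-cluster reciprocity of Assumption~\ref{ass:intra-reciprocity} (applied with $S'=S$) bounds the integrand pointwise by $K_I\,\w{(\C_k\setminus S)}{S}(s)$, and the very definition of $\tau_S(t)$ makes $\int_t^{\tau_S(t)}\w{(\C_k\setminus S)}{S}(s)\,ds=1$. Therefore $p_j(\tau_S(t))\ge\exp(-K_I)\ge\eta$ for every $j\in S$.

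It remains to produce one index $i\in\C_k\setminus S$ with $p_i(\tau_S(t))\ge\eta$. The natural candidate is a node absorbing a fair share of the cross-flow: since $\int_t^{\tau_S(t)}\w{(\C_k\setminus S)}{S}(s)\,ds=1$, pigeonholing over the $|\C_k\setminus S|\le n_k$ receivers yields $i$ with $\int_t^{\tau_S(t)}\sum_{h\in S}a_{ih}(s)\,ds\ge 1/n_k$. Writing $p_i$ in Duhamel form, $p_i(\tau_S)=\int_t^{\tau_S}\exp(-\int_s^{\tau_S}d_{ii}(u)\,du)\sum_{l\neq i}a_{il}(s)p_l(s)\,ds$ with $d_{ii}(s)=\sum_{l\in\C_k\setminus\{i\}}a_{il}(s)$, and keeping only the $l\in S$ contributions (where $p_l\ge\exp(-K_I)$ by the previous step) suggests $p_i(\tau_S)\gtrsim\exp(-K_I)/n_k$.

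The hard part is that this Duhamel estimate is contaminated by the self-persistence factor $\exp(-\int_s^{\tau_S}d_{ii})$ and, equivalently, by the outflow of $i$ toward other nodes of $\C_k\setminus S$: the in-cluster degree $d_{ii}(s)$ and the internal weights among $\C_k\setminus S$ are \emph{not} controlled by the cut weight $\w{(\C_k\setminus S)}{S}$ and may be arbitrarily large, so a crude Gr\"onwall bound is lost. The resolution is to avoid tracking a single vertex globally and instead invoke the path-wise lower bounds on the entries of $\Phi^k$ from \cite[Lemma~6]{MartinHendrickx2015}: mass reaching $i$ along a direct edge $h\to i$ (with $h\in S$) is lower bounded by the accumulated weight of that edge times persistence factors that intra-cluster reciprocity keeps bounded below, and this bound is insensitive to the internal redistribution inside $\C_k\setminus S$, which only moves mass among vertices we are not tracking. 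Combining such an entry estimate, namely $\Phi^k_{ih}(t,\tau_S)\ge\exp(-K_I)/n_k$ for a suitable $h\in S$, with $p_i\ge\Phi^k_{ih}$ closes the argument and yields exactly $\eta$.
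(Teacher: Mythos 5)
Your reduction to the vector $p(\tau)=\Phi^k(t,\tau)\mathbf{1}_S$ and your bound for the coordinates in $S$ are exactly the paper's argument: the paper introduces the same object as an artificial trajectory $u$ of~\eqref{sys:isolated-internal-dynamics} with $u_h(t)=1$ on $S$ and $0$ on $\C_k\setminus S$, and runs the same Gr\"onwall/comparison estimate on $\min_{h\in S}u_h$ using Assumption~\ref{ass:intra-reciprocity} and the normalization $\int_t^{\tau_S(t)}\w{(\C_k\setminus S)}{S}(s)\,ds=1$. The gap is in the second half. You correctly diagnose that a Duhamel bound on a single pigeonholed receiver $i$ is destroyed by the outflow of $i$ toward the rest of $\C_k\setminus S$, but the proposed repair does not exist: Lemma~6 of \cite{MartinHendrickx2015} is used only to identify $\sum_{h\in S}\Phi^k_{jh}(t,\cdot)$ with such a trajectory and to obtain~\eqref{eq:Phi-sum-to-one}; it provides no path-wise lower bound on individual entries, and no bound ``insensitive to the internal redistribution inside $\C_k\setminus S$'' can hold for a \emph{fixed} $i$. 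Assumption~\ref{ass:intra-reciprocity} only constrains ratios of cut weights, so the weights internal to $\C_k\setminus S$ may be arbitrarily large compared with $\w{(\C_k\setminus S)}{S}$; the mass arriving at your pigeonholed node can then be drained toward other nodes of $\C_k\setminus S$ fast enough to make $\Phi^k_{ih}(t,\tau_S(t))$ arbitrarily small. This is precisely why the conclusion is existential and cannot be obtained by tracking one vertex.

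The paper closes this step with a global argument over $\C_k\setminus S$: assume for contradiction that $u_i(s)\le e^{-K_I}/n_k$ for all $i\in\C_k\setminus S$ on $[t,\tau_S(t)]$, and consider the weighted sum $\Sigma(s)=\sum_{i\in\C_k\setminus S}K_I^{-i}u_i(s)$ in the spirit of Lemmas~8--9 of \cite{SamAntoine_Persistent_SICON2013}. The point of this functional is that redistribution inside $\C_k\setminus S$ cannot decrease it (this is where the intra-cluster cut-balance enters a second time), while the inflow from $S$, whose values stay above $e^{-K_I}$ by your first step, forces $\dot\Sigma(s)\ge\frac{(n_k-1)e^{-K_I}}{n_k}\sum_{i\in\C_k\setminus S}\sum_{h\in S}a_{ih}(s)$. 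Integrating from $t$ to $\tau_S(t)$, using $\Sigma(t)=0$, the definition of $\tau_S(t)$ and $K_I\ge 1$ produces a coordinate of value at least $e^{-K_I}/n_k=\eta$, contradicting the assumption. Some argument of this global, order-preserving type is needed to finish your proof; the single-vertex Duhamel route cannot be patched.
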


\begin{proof}
Following ideas inLemma $6$ and Remark $3$ in~\cite{MartinHendrickx2015}, we define an artificial trajectory $u$ satisfying system~\eqref{sys:isolated-internal-dynamics} over $[t,\tau_S(t)]$ with initial states $u_h(t) = 1$ for $h \in S$ and $u_i(t)  = 0$ for $i \in \C_k\setminus S$. Then, for all $s \in [t,\tau_S(t)]$, and $j \in \C_k$, $u_j(s) \in [0,1]$.
According to~\cite[Lemma~6]{MartinHendrickx2015},
$$\ssum_{h \in S} \Phi_{jh}(t,\tau_S(t)) = u_i(\tau_S(t)).$$
It remains to show there exists $i \in \C_k \setminus S$ such that
\begin{equation}\label{eq:bound-ui-eta}
 u_i(\tau_S(t)) \ge \eta.
\end{equation}
First we start by showing that for all $s\in [t,\tau_S(t)]$ and $h \in S$,
\begin{equation}\label{eq:bound-on-uhs}
 u_h(s) \ge e^{-K_I}.
\end{equation}
Denote $\bar{h}(s) \in S$ and $\underbar{u}(s)$ such that
$$
\underbar{u}(s) = u_{\bar{h}}(s) = \min_{h \in S} u_{h}(s).
$$
Recall that (see~\cite{HendrixTsitsiklis_CutBalanced_IEEETAC2013}), for almost all time $s\in [t,\tau_S(t)]$,
$$
\dot{\bar{u}}(s) = \ssum_{i\in \C_k} a_{\bar{h}i}(s) (u_i(s) - \bar{u}(s)).
$$
Since $u_i(s) \ge 0$ and $a_{\bar{h}i}(s) \ge 0$, we have
\begin{eqnarray*}
\dot{\bar{u}}(s) &\ge& - \ssum_{i\in \C_k \setminus S} a_{\bar{h}i}(s) \bar{u}(s)\\
&\ge& - \ssum_{i\in \C_k \setminus S} \ssum_{h \in S} a_{hi}(s) \bar{u}(s).
\end{eqnarray*}
A comparison theorem along with Assumption~\ref{ass:intra-reciprocity} and the definition of $\tau_S(t)$ allow to deduce
$$
\bar{u}(s) \ge e^{-K_I} \bar{u}(t) = e^{-K_I},
$$
by definition of $\bar{u}$ and the choice of initial condition for $u$. Equation~\eqref{eq:bound-on-uhs} is now granted.

To show a contradiction and prove equation~\eqref{eq:bound-ui-eta}, assume that for all $i \in \C_k \setminus S$ and $s\in [t,\tau_S(t)]$, 
\begin{equation}\label{eq:bound-ui-for-contradiction}
u_i(s) \le \frac{e^{-K_I}}{n_k} < e^{-K_I}.
\end{equation}
Denote $\Sigma(s) = \sum_{i\in \C_k \setminus S} {K_I}^{-i} u_i(s)$.
Under the previous condition and equation~\eqref{eq:bound-on-uhs}, we can apply~\cite[Lemma~$8$]{SamAntoine_Persistent_SICON2013} so that the conclusion of~\cite[Lemma $9$]{SamAntoine_Persistent_SICON2013} rewrites as
\begin{eqnarray*}
\dot{\Sigma}(s)
&\ge& 
\left(\underbar{u}(s) - \max_{i\in \C_k \setminus S} u_i(s) \right) 
\ssum_{i\in \C_k \setminus S} \ssum_{h \in S} a_{ih}(s)\\
&\ge& 
\frac{(n_k-1) e^{-K_I}}{n_k}\ssum_{i\in \C_k \setminus S} \ssum_{h \in S} a_{ih}(s).
\end{eqnarray*}
Denote $j \in \argmax_{i\in \C_k \setminus S} \{K_I^{-i} u_i(\tau_S(t))\}$. 
Since $\Sigma(t) = 0$, by integration, we obtain
\begin{eqnarray*}
K_I^{-j} u_j(\tau_S(t)) &\ge& \frac{e^{-K_I}}{n_k}\int_t^{\tau_S(t)} \ssum_{i\in \C_k \setminus S} \ssum_{h \in S} a_{ih}(s) ds.
\end{eqnarray*}
Using the definition of $\tau_S(t)$, and $K_I \ge 1$, we obtain
$$
u_j(\tau_S(t)) \ge  \frac{e^{-K_I}}{n_k}.
$$
So that, either the previous equation is correct or equation~\eqref{eq:bound-ui-for-contradiction} is wrong, in any case, equation~\eqref{eq:bound-ui-eta} is satisfied and the lemma is proven.
\end{proof}



\begin{lemma}\label{l:bound-Phi_S_b}
Let $p,d \in \{1, \ldots, n_k\}$ such that $p<d$. Let $S_p \sqsubset \C_k$ with $|S_p| = p$ and $t_p \ge 0$. Then, there exists a growing sequence of sets
$S_{p+1} \ldots, S_d \sqsubset \C_k$ of cardinal $|S_b| = b$ such that for all $b \in \{p, \ldots d-1\}$, and $S_b \subseteq S_{b+1}$, which verifies that for all $b \in \{p+1, \ldots d\}$, 
\begin{equation}\label{eq:label-bound-Phi_S_b}
\forall j \in S_b,
\sum_{h \in S_p} \Phi_{jh}(t_p, t_b) \ge \eta^{b-p},
\end{equation}
where $\eta = e^{-K_I}/n_k$ and $t_b=(\tau_{S_{b-1}}\circ \ldots\circ\tau_{S_p})(t_p)$
with $\circ$ standing for the composition of functions.
\end{lemma}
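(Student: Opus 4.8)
The plan is to prove the statement by induction on $b$, growing the set one element at a time and invoking Lemma~\ref{l:Phi_bound_eta} at each stage. I would build the sequence $S_{p+1},\ldots,S_d$ as follows: given $S_b$ with $|S_b|=b$ together with the time $t_b$, I apply Lemma~\ref{l:Phi_bound_eta} to the non-trivial subset $S_b\sqsubset\C_k$ at the base time $t_b$; this yields an index $i\in\C_k\setminus S_b$ such that $\sum_{h\in S_b}\Phi_{jh}(t_b,\tau_{S_b}(t_b))\ge\eta$ for every $j\in S_b\cup\{i\}$. I then set $S_{b+1}=S_b\cup\{i\}$ and $t_{b+1}=\tau_{S_b}(t_b)$, which matches the prescribed definition $t_{b+1}=(\tau_{S_b}\circ\cdots\circ\tau_{S_p})(t_p)$ and guarantees $|S_{b+1}|=b+1$ and $S_b\subseteq S_{b+1}$. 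Note that $S_b$ is a non-trivial subset for every $b$ in range, since $1\le p\le b\le d-1\le n_k-1$, so Lemma~\ref{l:Phi_bound_eta} always applies.

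The heart of the argument is the propagation of the lower bound across one step, which relies on the cocycle (semigroup) property of the fundamental matrix. From the definition~\eqref{eq:fundamental-matrix}, for $t_p\le t_b\le t_{b+1}$ one has $\Phi(t_p,t_{b+1})=\Phi(t_b,t_{b+1})\,\Phi(t_p,t_b)$, i.e. componentwise $\Phi_{jh}(t_p,t_{b+1})=\sum_{m\in\C_k}\Phi_{jm}(t_b,t_{b+1})\Phi_{mh}(t_p,t_b)$. Summing over $h\in S_p$, exchanging the order of summation, and discarding the non-negative terms with $m\notin S_b$ (all entries of $\Phi$ being non-negative by~\eqref{eq:Phi-sum-to-one}), I would obtain, for any $j\in S_{b+1}$,
$$
\sum_{h\in S_p}\Phi_{jh}(t_p,t_{b+1})\ \ge\ \sum_{m\in S_b}\Phi_{jm}(t_b,t_{b+1})\Big(\sum_{h\in S_p}\Phi_{mh}(t_p,t_b)\Big).
$$
The inductive hypothesis bounds each inner factor by $\eta^{b-p}$ for $m\in S_b$, while the choice of $i$ from Lemma~\ref{l:Phi_bound_eta} bounds $\sum_{m\in S_b}\Phi_{jm}(t_b,t_{b+1})\ge\eta$ for every $j\in S_{b+1}$; multiplying the two yields the desired $\eta^{b+1-p}$.

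The base case $b=p$ is immediate since $\Phi(t_p,t_p)$ is the identity, so $\sum_{h\in S_p}\Phi_{jh}(t_p,t_p)=1=\eta^{0}$ for $j\in S_p$. The main obstacle, and essentially the only real idea, is the bridging step: Lemma~\ref{l:Phi_bound_eta} only controls the mass that has flowed into $S_b$ over the most recent interval, so one cannot estimate $\sum_{h\in S_p}\Phi_{jh}(t_p,t_{b+1})$ directly. The fix is to route the estimate through $S_b$ using the semigroup decomposition and to retain only the intermediate indices $m\in S_b$, accepting the loss of a factor $\eta$ at each step — which is precisely why the exponent increases from $b-p$ to $b+1-p$. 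I expect the remaining bookkeeping, namely checking the cardinalities, the nesting $S_b\subseteq S_{b+1}$, and the consistency of the composed times $t_b$, to be routine.
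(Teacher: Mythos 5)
Your proof is correct and follows essentially the same route as the paper's: induction on $b$, invoking Lemma~\ref{l:Phi_bound_eta} at time $t_b$ to pick the new element, and propagating the bound through the cocycle identity $\Phi(t_p,t_{b+1})=\Phi(t_b,t_{b+1})\Phi(t_p,t_b)$ restricted to intermediate indices in $S_b$. If anything, your version is slightly more careful than the paper's (which writes the restricted sum over $l\in S_b$ as an equality rather than the inequality obtained by discarding non-negative terms), and your base case at $b=p$ via $\Phi(t_p,t_p)=I$ is an equivalent alternative to the paper's base case at $b=p+1$.
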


\begin{proof}
We prove the lemma by induction on $b$. For $b=p+1$, the equation~\eqref{eq:label-bound-Phi_S_b} is obtained as a direct consequence of Lemma~\ref{l:Phi_bound_eta} with $S := S_p$ and $S_{p+1} = S \cup \{i\}$. Assume that equation~\eqref{eq:label-bound-Phi_S_b} is true for some $b \in \{p+1, \ldots d-1\}$. We apply Lemma~\ref{l:Phi_bound_eta} with $S := S_{b}$ and $t:= {t}_b$ to obtain the existence of an element $i' \in \C_k \setminus S_{b}$ such that for all $j \in S_b \cup \{i '\}$,
\begin{equation}\label{eq:sum_phi_eta}
\sum_{h \in S_b} \Phi_{i'h}(t_b, t_{b+1}) \ge \eta.
\end{equation}
Denote $S_{b+1} = S_b \cup \{i '\}$.
Let $j \in S_{b+1}$. To prove equation~\eqref{eq:label-bound-Phi_S_b} for $b := b+1$, notice that by definition of the fundamental matrix,
$$
\sum_{h \in S_p} \Phi_{jh}(t_p, t_{b+1}) = 
\sum_{h \in S_p} \sum_{l \in S_b} \Phi_{jl}(t_b, t_{b+1}) \Phi_{lh}(t_p, t_b).
$$
Applying inequality~\eqref{eq:sum_phi_eta} and then equation~\eqref{eq:label-bound-Phi_S_b}, we obtain
\begin{eqnarray*}
\sum_{h \in S_p} \Phi_{jh}(t_p, t_{b+1}) &=& \eta \sum_{h \in S_p} \Phi_{lh}(t_p, t_b) \\
&\ge& \eta \cdot \eta^{b-p} = \eta^{(b+1)-p}.
\end{eqnarray*}
\end{proof}

\begin{corollary}\label{cor:bound_phi_eta}
Let $t\ge0$. 
There exists a finite time $t'\ge t$ such that 
for all $r,j \in \C_k$,
\begin{equation*}
\forall s\ge t' 
, \Phi_{jr}(t,s) \ge \left(\exp(-K_I) / n_k\right)^{n_k-1}.
\end{equation*}
\end{corollary}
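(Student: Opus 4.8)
The plan is to build the uniform lower bound in two stages: first obtain it at a single, carefully chosen instant by climbing from a singleton set to the whole cluster via Lemma~\ref{l:bound-Phi_S_b}, and then propagate it to all later times using the semigroup structure of the fundamental matrix.

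First, fix $t\ge 0$ and a target node $r\in\C_k$. I would invoke Lemma~\ref{l:bound-Phi_S_b} with $p=1$, $S_p=S_1=\{r\}$, $t_p=t$ and $d=n_k$. This yields a growing chain $S_1\subseteq S_2\subseteq\cdots\subseteq S_{n_k}$ of subsets of $\C_k$ with $|S_b|=b$; in particular $S_{n_k}=\C_k$, together with a finite time $t'_r:=t_{n_k}=(\tau_{S_{n_k-1}}\circ\cdots\circ\tau_{S_1})(t)$. Taking $b=d=n_k$ in~\eqref{eq:label-bound-Phi_S_b} and recalling that $S_1=\{r\}$ is a singleton, the conclusion reads
\[
\Phi_{jr}(t,t'_r)=\sum_{h\in S_1}\Phi_{jh}(t,t'_r)\ge \eta^{\,n_k-1},\qquad\forall j\in\C_k,
\]
with $\eta=\exp(-K_I)/n_k$. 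This is exactly the desired bound, but only at the single instant $s=t'_r$.

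Second, I would extend the bound to every $s\ge t'_r$. The key tool is the composition (semigroup) property of the fundamental matrix, $\Phi^k(t,s)=\Phi^k(t'_r,s)\,\Phi^k(t,t'_r)$, which componentwise gives
\[
\Phi_{jr}(t,s)=\sum_{l\in\C_k}\Phi_{jl}(t'_r,s)\,\Phi_{lr}(t,t'_r),\qquad s\ge t'_r.
\]
Since each factor $\Phi_{lr}(t,t'_r)\ge\eta^{\,n_k-1}$ from the first stage, and since the weights $\Phi_{jl}(t'_r,s)$ are non-negative and sum to one by~\eqref{eq:Phi-sum-to-one}, the right-hand side is a convex combination bounded below by $\eta^{\,n_k-1}$. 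Hence $\Phi_{jr}(t,s)\ge\eta^{\,n_k-1}$ for all $s\ge t'_r$ and all $j\in\C_k$.

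Finally, since the time $t'_r$ produced above depends on the choice of $r$, I would set $t'=\max_{r\in\C_k}t'_r$, which is finite because $\C_k$ has finitely many nodes. For every $s\ge t'$ we then have $s\ge t'_r$ for each $r$, so $\Phi_{jr}(t,s)\ge\eta^{\,n_k-1}=\left(\exp(-K_I)/n_k\right)^{n_k-1}$ holds simultaneously for all $r,j\in\C_k$, which is the claim. The only genuinely delicate point is the propagation step: the chain construction certifies the bound at one instant, and it is the row-stochasticity of $\Phi$ together with the semigroup identity that converts this into a bound valid for all subsequent times; without it the quantifier ``for all $s\ge t'$'' would not follow.
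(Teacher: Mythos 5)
Your proposal is correct and follows essentially the same route as the paper's own proof: apply Lemma~\ref{l:bound-Phi_S_b} with $p=1$, $S_1=\{r\}$, $d=n_k$ to get the bound at a single time, propagate it to all later times via the composition property of the fundamental matrix together with the row-sum-to-one property~\eqref{eq:Phi-sum-to-one}, and finish by taking the maximum of the resulting times over $r\in\C_k$. Your write-up is in fact slightly more explicit than the paper's about the choice of the singleton initial set and the role of row-stochasticity in the propagation step.
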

\begin{proof}
Let $r \in \C_k$. Applying Lemma~\ref{l:bound-Phi_S_b} with $p=1$, $d = n_k$ provides the existence of $\tau \ge t$ such that for all $h \in \C_k$,
$$
\Phi_{hr}(t,\tau) \ge \eta^{n_k-1}.
$$
Let $s\ge \tau$ and $h \in \C_k$. The fundamental matrix definition provides
\begin{eqnarray*}
\Phi_{jr}(t,s) &=& \ssum_{h\in \C_k}  \Phi_{jh}(\tau,s) \Phi_{hr}(t,\tau)\\
&\ge& \eta^{n_k-1}\ssum_{h\in \C_k}  \Phi_{jh}(\tau,s) = \eta^{n_k-1}.
\end{eqnarray*}
Taking the largest $\tau$ for all $r \in \C_k$ allows to conclude.
\end{proof}

{\it Proof of Lemma \ref{l:invariant_qk_and_bounds}:} 
The property of the fundamental matrix in equation~\eqref{eq:Phi-sum-to-one} along with equation~\eqref{eq:def_of_q_k} show that
$$
q^k(t)^T \cdot \un_{n_k} = \sum_{j \in \C_k} \Phi^k_{ij}(t,\infty) = 1.
$$
We now derive the lower bound $q_{min}$. Since for all $i,j \in \C_k$, $\Phi^k_{ij}(t,\infty) = q_j^k(t)$, Corollary~\ref{cor:bound_phi_eta} when $s\rightarrow \infty$ directly provides
$$
q_j^k(t) \ge \left(\exp(-K_I) / n_k\right)^{n_k-1} \ge q_{min}.
$$

We now turn to the upper bound $q_{max}$. Let $k$ in $\MM $ and $i \in \C_k$. We have
\begin{eqnarray*}
q^k_i(t) &=& 1 - \sum_{j \in \C_k \setminus \{i\}} q^k_j(t)\\
&\le& 1 - (n_k - 1) q_{min} \\
&\le& 1 - (\min_{k \in \MM } n_k - 1) q_{min}.
\end{eqnarray*}
Finally, we show that $q^k(t)^T \cdot \tilde{x}_{\C_k}(t)$ is invariant in time.
For any trajectory $f$, denote $f(\infty) = \lim_{t \rightarrow \infty} f(t)$.
By definition of the fundamental matrix (equation~\eqref{eq:fundamental-matrix}), and of $q^k(t)$ (equation~\eqref{eq:def_of_q_k}),
\begin{eqnarray*}
q^k(\infty)^T \cdot \tilde{x}_{\C_k}(\infty) &=& q^k(\infty)^T \cdot \Phi^k(t,\infty) \tilde{x}_{\C_k}(t) \\
&=& q^k(\infty)^T \cdot \un_{n_k} \cdot q^k(t)^T \tilde{x}_{\C_k}(t) \\
&=& q^k(t)^T \cdot\tilde{x}_{\C_k}(t).
\end{eqnarray*}

{\it Proof of Lemma~\ref{l:orders}:} 
Notice that $L(t)H=L^E(t)H$ and $J(t)L(t)=J(t)L^E(t)$. Thus,
\[
\begin{split}
\|\bar{A}_{11}(t)\|_\infty&=\|J(t)L^E(t)H\|_\infty=\max_{i\in \MM}\sum_{j=1}^m |\left(J(t)L^E(t)H\right)_{ij}|\\
&\le\max_{i\in \MM}\sum_{j=1}^m\sum_{k=1}^n \sum_{h=1}^n| J_{ik}(t)| |L_{kh}^E(t)| |H_{hj}|\\
&\leq\max_{i\in \MM}\left(\sum_{k\in\C_i}\sum_{h\notin\C_i}  |L_{kh}^E(t) | +\sum_{k\in\C_i}\sum_{h\in\C_i}  |L_{kh}^E(t) | \right)\\
&=\max_{i\in \MM}\left(\sum_{k\in\C_i}\sum_{h\notin\C_i}  |L_{kh}^E(t) | +\sum_{k\in\C_i}  |L_{kk}^E(t) | \right)\\
&=2\max_{i\in \MM}\sum_{k\in\C_i}\sum_{h\notin\C_i}  |L_{kh}^E(t) | \leq2\gamma^E(t)\leq 2c^I(t)\eps.
\end{split}\]
Similarly, using that for all $i\in\{1,\ldots,n-m\},\ j\in\{1,\ldots,n\}$ the components of $Q$ satisfy $|Q_{ij}|\leq 1$, one obtains:
\[
\begin{split}
\|\bar{A}_{21}(t)\|_\infty&=\|QL^E(t)H\|_\infty=\hspace{-0.3cm}\mmax_{i\in\{1,\ldots,n-m\}}\sum_{j=1}^m |\left(QL^E(t)H\right)_{ij}|\\
&\hspace{-1cm}\le\mmax_{i\in\{1,\ldots,n-m\}}\sum_{k=1}^n \sum_{h=1}^n| Q_{ik}| |L_{kh}^E(t)| |H_{hj}|\\
&\hspace{-1cm}\leq2\mmax_{i\in\{1,\ldots,n-m\}}\sum_{k\in\C_i}\sum_{h\notin\C_i}  |L_{kh}^E(t) |\leq2\gamma^E(t)\leq 2c^I(t)\eps.
\end{split}
\]
Using a similar reasoning and the structure of $J(t)$ and $Q$ one also obtains
$
\|\bar{A}_{12}(t)\|_\infty\leq 2c^I(t)\eps.
$

Finally, using  $L(t)=L^I(t)+L^E( t)$ one has
\[\begin{split}
\|\bar{A}_{22}(t)\|_\infty&=\|QL(t)\tilde{Q}(t)\|_\infty\\ &\geq\|QL^I(t)\tilde{Q}(t)\|_\infty-\|QL^E(t)\tilde{Q}(t)\|_\infty.
\end{split}\]
Straightforwardly one gets that 
\[\begin{split}
\|QL^E(t)\tilde{Q}(t)\|_\infty&\leq \|Q\|_\infty\|L^E(t)\|_\infty\|\tilde{Q}(t)\|_\infty\\
&\leq 2\cdot 2\gamma^E(t)\cdot 2\leq 8 c^I(t)\eps.
\end{split}\]
On the other hand $QL^I(t)\tilde{Q}(t)$ is block diagonal being equal to $diag(Q_1L^1(t)\tilde{Q}_1(t),\ldots,Q_mL^m(t)\tilde{Q}_m(t))$. Therefore,  \[\|QL^I(t)\tilde{Q}(t)\|_\infty=\max_{k\in\MM}\|Q_kL^k(t)\tilde{Q}_k(t)\|_\infty.\]
Direct computation shows that $\forall\ i,j\in\{1,\ldots,n_k-1\}$,
\[
(Q_kL^k(t)\tilde{Q}_k(t))_{ij}=L^k_{i+1\ j+1}-L^k_{1\ i+1}.
\]
Consequently,
\[\begin{split}
\|Q_kL^k(t)\tilde{Q}_k(t)\|_\infty&=\max_{i=1}^{n_k-1}\sum_{j=1}^{n_k-1}|L^k_{i+1\ j+1}-L^k_{1\ j+1}|\\
&\hspace{-2cm}\geq \max_{i=1}^{n_k-1}\left|\sum_{j=1}^{n_k-1}L^k_{i+1\ j+1}-L^k_{1\ j+1}\right|\\
&\hspace{-2cm}= \max_{i=1}^{n_k-1}\left|-L^k_{i+1\ 1}-\sum_{j=1}^{n_k-1}L^k_{1\ j+1}\right|\geq c^I,
\end{split}\]
yielding $\|QL^I(t)\tilde{Q}(t)\|_\infty\geq c^I$.

\bibliographystyle{IEEEtran}
\bibliography{consensus_singular_perturbation_cdc_long_for_arxiv}

\end{document}